\newcommand{\1}{\hspace{0 mm}^\sim \hspace{-0.2 mm}}
\newcommand{\N}{\mathbb{N}}
\newcommand{\Z}{\mathbb{Z}}
\def\fagnotbb{{\triangleright}}
\def\fagnotff{{\triangleleft}}
\spnewtheorem*{sketch}{Sketch of the proof}{\itshape}{\rmfamily}
\newcommand\fagnotins[2]{{}^{#1\!}B^{#2}}
\newcommand\fagnotwrd[2]{{}_{#1}W_{#2}}
\renewcommand\fagnotwrd[2]{{}_{#1\!}W_{\! #2}}
\newcommand\fagnotrul[4]{\langle  #1 {\mid} #3{-}#4  {\mid} #2  \rangle}
\newcommand{\fagnotproduc}[4]{#2,#3\, \vdash_{#1} #4} 
\newcommand{\fagnotlang}[1]{L({#1})} 
\newcommand{\fagnotcir}[1]{{}^\sim #1}
\DeclareMathOperator{\Card}{Card}
\title{Splicing Systems from Past to Future: \\ Old and New Challenges
\thanks{
Partially supported by
the $FARB$ Project
{\it ``Aspetti algebrici e computazionali nella teoria dei codici
e dei linguaggi formali''}
(University of Salerno, 2012), the
$FARB$ Project
{\it ``Aspetti algebrici e computazionali nella teoria dei codici,
degli automi
e dei linguaggi formali''}
(University of Salerno, 2013)
and the $MIUR$ $PRIN$ $2010$-$2011$ grant
{\it ``Automata and Formal Languages: Mathematical and Applicative Aspects''},
code H41J12000190001.}}
\author{Luc Boasson\inst{1} \and Paola Bonizzoni\inst{2} \and Clelia De Felice\inst{3} \and
Isabelle Fagnot\inst{4} \and \\ Gabriele Fici\inst{5} \and Rocco Zaccagnino\inst{3} \and
Rosalba Zizza\inst{3}}
\authorrunning{L.~Boasson et al.}
\institute
{Universit\'e Paris Diderot--Paris VII, LIAFA,  \and Universit\`a degli Studi di Milano Bicocca, \\
Dipartimento di Informatica, Sistemistica e Comunicazione,
\and Universit\`a degli Studi di Salerno, Dipartimento di Informatica,
\and Universit\'e Paris Est, LIGM and Université  Paris Diderot--Paris VII,
\and Universit\`a degli Studi di Palermo,
Dipartimento di Matematica e Informatica}
\begin{document}

\sloppy

\maketitle

\begin{abstract}
A splicing system is a formal model of a recombinant behaviour of sets of
double stranded DNA molecules when acted on by restriction enzymes and ligase.
In this survey we will concentrate on a specific behaviour of a type of splicing systems,
introduced by P\u{a}un
and subsequently developed by many researchers in both linear and circular case
of splicing definition. In particular, we will present recent results on this topic
and how they stimulate new challenging investigations.
\end{abstract}

\section{Introduction} \label{introduction}

Linear splicing  is a language-theoretic word operation introduced by T. Head in  \cite{h87}
which models a DNA recombination process, namely  the action
of two compatible restriction enzymes and a ligase enzyme on two DNA
strands. The first enzyme recognizes a specific pattern in any DNA string and
cuts the string containing this pattern in a specific point inside the
recognized pattern. The second restriction enzyme acts similarly.
The site and the shape of the cut ends are
specific to each enzyme. Then ligase enzymes perform the second step of
this process, pasting together properly matched fragments,
under some chemical conditions. Abstracting this phenomenon, the
linear splicing operation is applied to two words and two different words may be generated.
It concatenates a prefix of one string
with a suffix of another string, under some conditions, represented
as a (splicing) rule. In this paper a rule will be represented as
a quadruple of words $r=u_1\#u_2\$u_3\#u_4$. A linear splicing system consists of a set $I$
of words (called initial language) and a set $R$ of (splicing) rules.
The language generated by a splicing system (splicing language)
contains every word that can
be obtained by repeated application of rules to pair of words in the initial language
and to the intermediately produced words
(each word in $I$
or obtained by splicing is present in an
unbounded number of copies).
Many of the initial research has been devoted to
the study of the computational power of linear splicing
systems. It mainly depends on
which level of the Chomsky hierarchy
$I$ and $R$ belong to. Moreover, for any pair $F_1$, $F_2$ of families
of languages in the Chomsky hierarchy, the class of languages generated by
linear splicing systems with $I$ in $F_1$ and $R$ in $F_2$ is
either a specific class of languages
in the Chomsky hierarchy
or it is strictly intermediate
between two of them \cite{hb,book}.
For instance, the class of languages generated by splicing systems
with a finite initial language and a finite set of rules,
often referred to as finite splicing systems, contains all finite languages and it is
strictly contained in the class of regular languages.
This basic result has been proved
in several  papers by using different approaches (see \cite{hb,z}).
In this paper by a linear splicing system we always mean a finite linear splicing system.
In spite of the vast literature on the topic, many problems are still open.
For instance, the question of whether or not one of the known subclasses of the regular languages
corresponds to the class of splicing languages
is still unanswered. In \cite{bj} a property of these
languages is provided: a necessary condition for a regular
language to be a splicing language is that it must have a constant. However a
structural characterization of splicing languages is still lacking. It has been
provided for splicing languages generated by reflexive and/or symmetric rules
\cite{b,rifl}.
Moreover, it has been proved that it
is decidable whether a regular language is generated by a splicing system \cite{kk}.

Circular splicing systems were introduced in \cite{h92}
along with various open
problems related to their computational power.
In the circular context, the splicing operation acts on two circular DNA molecules
by means of a pair of restriction enzymes as follows.
Each of these two enzymes is able to recognize a pattern inside one of the given
circular DNA molecules and to cut the molecule in the middle of such a pattern.
Two linear molecules are produced and
then they are pasted together by the action of ligase enzymes.
Thus a new circular DNA sequence is generated. For instance, circular splicing models
the integration of a plasmid into the DNA of a host bacterium \cite{hsugg,h06,h12,hplasmid}.
Depending on whether or not these ligase enzymes substitute the recognized pattern
(in nature, both situations can happen), we have the Pixton definition or the Head
and  P\u{a}un definition
(see Section \ref{cdef}).

Obviously a string of circular DNA can be represented by a circular word,
i.e., by an equivalence class with respect to
the conjugacy relation $\sim$, defined by $xy \sim yx$, for $x,y \in A^*$
\cite{lot}.
The circular splicing operation is applied to two circular words and
a circular word may be generated.
A circular language is a set of
circular words and we can also
give a definition of a regular (resp. context-free, context-sensitive)
circular language.
In short, as in the linear case, circular splicing rules are iteratively applied starting
from an initial circular language and circular words in this
process are supposed to be present in an
unbounded number of copies.
A circular splicing system is defined by a circular language $I$, the initial
set, and a set of splicing rules $R$. Once again, rules will be represented as
quadruple of words
$r=u_1\#u_2\$u_3\#u_4$.
The circular language generated
by a circular splicing system is the smallest language which contains
$I$ and is invariant
under iterated splicing
by rules in $R$.

While there have been many
articles on linear splicing,
relatively few works on circular splicing
systems have been
published. Once again, the
computational power of circular splicing systems
depends on the level in
the (circular) Chomsky hierarchy the initial set $I$
and
the set $R$ of rules belong to.
In this paper $R$ will always be a finite set.
Moreover we
mainly focus on finite circular splicing systems.
A circular splicing system is finite (resp. regular,
context-free, context-sensitive) if its initial set is finite
(resp. regular, context-free, context-sensitive).
The interesting fact is that
in contrast with the linear case,
the class of circular
languages generated by finite circular splicing systems contains
context-free circular languages
which are
not regular \cite{ssd}, context-sensitive
circular languages which
are not context-free \cite{flat},
and it does not contain all regular circular languages \cite{rairo}.
Recently, it has been proved that this class is contained
in the class of
context-sensitive circular languages
(see \cite{flat}).
This result remains true if we consider context-sensitive circular
splicing systems (see \cite{flat}).
In the same paper it has also been proved that
the language is context-free if it is generated by an alphabetic
context-free splicing system
(a circular splicing system is alphabetic if for any rule
$r=u_1\#u_2\$u_3\#u_4$ the words $u_j$ are letters or the empty word).
It is decidable whether a regular circular language and the language generated by
a regular circular splicing system are equal but we do
not still known how to decide the inclusion of one of them in the other \cite{flat}.
It is also decidable whether a given regular language is generated by an alphabetic
finite splicing system
\cite{flat} but the same problem is still open for general systems.
All the above mentioned results from \cite{flat} have been obtained first for a new variant of circular splicing,
introduced in the same paper and named flat splicing, then easily extended to the classical model.
Other still unanswered questions remain. For instance we still do not know how to decide
whether a circular splicing language is regular, nor do we know how to decide
whether a regular circular language is a splicing language.
Finally, a characterization of those splicing languages which are
regular, or vice versa is still lacking.
Partial contributions to these questions have been given in \cite{completi,rairo,marked}.
In particular, regular circular languages generated by monotone complete systems
have been characterized in \cite{completi}. This result will be differently proved in Section
\ref{Scomplete}, by means of some special flat systems.

The paper is organized as follows.
In Sections \ref{wdef}, \ref{cwdef}, we set up
the basic definitions for words and circular words respectively.
We briefly discuss linear splicing in Sections \ref{ldef}--\ref{mainlinear}.
In Sections \ref{ldef} and \ref{lsplicing}, we introduce preliminary concepts
whereas a simple type of splicing is discussed in Section \ref{sh}.
The main problems and recent results are presented in Section \ref{mainlinear}.
Sections \ref{cdef}--\ref{Scomplete} are devoted to circular splicing.
In Section \ref{cdef}, we set up
the basic definitions on circular splicing.
Decidability properties and results on the position of splicing systems and their languages
within the Chomsky hierarchy are presented in Section \ref{Sflat}, along with
flat splicing.  The special case of the alphabetic splicing systems will
be discussed in Section \ref{Sflat}. A subclass of them, namely
the class of the so-called simple or semi-simple systems,
have been considered in the literature and their definition
is given in Section \ref{cssh}. A known characterization of regular languages generated by
special semi-simple systems (marked systems, complete systems) is
recalled in Sections \ref{Smarked}, \ref{Scomplete}.

\section{Words} \label{wdef}

We suppose the reader familiar with basic notions on formal languages
and we provide here only the necessary notations, \cite{bpr,hum,lot,lotN}
are some general references on the topic.
Let us denote $A^*$ the free monoid over a finite alphabet $A$
and $A^+=A^* \setminus 1$, where $1$ is the empty
word. For a word $w \in A^*$, $|w|$ is the length of $w$ and
for every $a \in A$, $w \in A^*$, we denote by $|w|_a$ the
number of occurrences of $a$ in $w$.
We also set $alph(w) = \{a \in A ~ | ~ |w|_a > 0 \}$
and $|w|_{A'} = \sum_{a \in A'} |w|_a$, where
$A' \subseteq A$.
A word $x \in A^*$ is a {\it factor} of $w \in A^*$ if
there are $u_1,u_2 \in A^*$ such that $w=u_1xu_2$.
We denote by $Fact(w)$ the set of the factors of $w$.
The {\it reversal} $w^{Rev}$
of $w$ is defined by the relations $1^{Rev} = 1$ and,
for all $x \in A^*$, $a \in A$, $(xa)^{Rev} = ax^{Rev}$.
For a subset $X$ of $A^*$, $X^{Rev} = \{w^{Rev} ~|~ w \in X \}$
is the reversal of $X$.
Furthermore, for a subset $X$ of $A^*$,
$\Card(X)$ is the cardinality of $X$.
A language is {\it regular} if it is recognized by a finite automaton.
We denote by $Fin$ (resp. $Reg$) the class of finite
(resp. regular) languages over $A$, at times represented
by means of regular expressions.
Let $L \subseteq A^*$ be a set and let $w \in A^*$ be a word.
We denote by $\Gamma_L(w)$ the set of {\em contexts} of $w$ in $L$, that is
$\Gamma_L(w)=\{(u,v) \in A^* \times A^* ~|~ uwv \in L \}$.
A word $w \in A^*$ is said to be a {\it constant} for $L$ if
for any $(u,v), (u',v') \in \Gamma_L(w)$ one has also
$(u,v'), (u',v) \in \Gamma_L(w)$ \cite{bpr,sch}.

\section{Splicing Operation from Scratch} \label{ldef}

As a model of the  biochemical operation of splicing, in \cite{h87}
Head considered the following
string operation (passing from double stranded sequences to strings is allowed, due
to the precise Watson-Crick complementarity of nucleotides). Consider an alphabet $A$
and two finite sets, $B$ and $C$, of triples
$(\alpha,\mu,\beta)$ of words in $A^*$.
These triples are called patterns and the string
$\mu$ is called the crossing of the triple. Given two patterns
(with the same crossing) $(\alpha, \mu ,\beta)$,
$(\alpha' , \mu,  \beta')$,
both in $B$ or both in $C$,
and two words
$u \alpha \mu \beta v$, $p \alpha' \mu \beta' q$, these words can be
spliced and the splicing operation produces $u\alpha \mu \beta' q$,
$p \alpha' \mu \beta v$.

\begin{example}
{\rm Let us consider the word
$w=cxcxc$ and the triple $(c,x,c)$. The pattern $cxc$ occurs twice in $w$
and the triple can be applied, coupled with itself, to two copies of $w$.
Thus, using all combinations, the result of splicing is
$(cx)c+(cx)^2c+(cx)^3c$.}
\end{example}

Abstracting further from this idea, P\u{a}un considered splicing rules of the
form $r=u_1\#u_2\$u_3\#u_4$,
where $u_1,u_2,u_3,u_4$ are strings over a given alphabet $A$ and $\#,\$ \not \in A$
\cite{Paun96}.
The words $u_1u_2$, $u_3u_4$ are called {\em sites} of $r$.
Given such a rule $r$, by splicing the two strings $x=x_1u_1u_2x_2$,
$y=y_1u_3u_4y_2$,
the strings $w'=x_1u_1u_4y_2$,
$w''=y_1u_3u_2x_2$ are produced.
We denote this operation by
$(x,y) \vdash_r (w',w'')$.

It is clear that this is a generalization of Head's definition
of splicing
(where the crossing is supposed to be empty).

\begin{example} \cite{hp06} \label{ExCG}
{\rm
Let $r=cg \# cg \$ cg \# cg$ and consider
$u=aacgcgaacgcgaa=(aacgcg)^2aa$ and $v=ttcgcgtt$.
There are two occurrences of the string $cgcg$ in $u$ and only one in $v$.
Thus, $aacgcgtt, aacgcgaacgcgtt$ are generated as well as
$ttcgcgaacgcgaa, ttcgcgaa$: the former by applying $r$ to $u$ and $v$,
the latest by applying $r$ to $v$ and $u$.}
\end{example}

A still more general definition of splicing was considered
by Pixton \cite{pixDAM}.
The rules are of the form $(\alpha, \alpha', \beta)$
and by splicing two strings $\epsilon \alpha \delta$
and $\epsilon' \alpha' \delta'$,
the strings $\epsilon \beta \delta'$,
$\epsilon' \beta \delta$ are generated.
Observe that this definition of splicing is more general than P\u{a}un's
one (note that the context substrings $\alpha, \alpha'$
are substituted by $\beta$ during the splicing).

\begin{example}
{\rm
The rule $(a,xa,xa)$ applied to (two copies of) the word
$cxae$ generates $cxxae$ (and $cxae$).
Splicing allows us to ``pump'' the letter $x$. }
\end{example}

\section{Computing Devices Based on Splicing: Splicing Systems} \label{lsplicing}

{\em Splicing systems} are models for generating languages based on the splicing operation.
In the literature, different models of splicing systems were presented
\cite{hb,Paun96,book,pixDAM}
and three kinds of splicing systems have been studied,
by using the three definitions of splicing operation
given in the previous section.
In this paper we consider only the {\em iterated splicing}
operation given by  P\u{a}un and the corresponding systems as follows.

A {\em splicing system} (or $H$-system)
is a triple $H=(A,I,R)$, where $A$ is a
finite alphabet, $I \subseteq A^*$
is the initial language and $R$ is the set of
rules, with
$R \subseteq A^* \# A^* \$ A^* \# A^*$
and $\#, \$ \not \in A$.
It is finite if $I$ and $R$ are both finite sets.
Let $L \subseteq A^*$.
We set
$\sigma'(L)=\{w',w'' \in A^* ~|~
(x,y){\vdash}_r ~(w',w''),
~x,y \in L, r \in R\}$.
The (iterated) splicing
operation is defined as follows
\begin{eqnarray*}
\sigma^0(L) &=&L,   \\
\sigma^{i+1}(L) &=& \sigma^i(L) \cup
\sigma'(\sigma^i(L)), ~i \geq 0, \\
\sigma^*(L) &=& \bigcup_{i \geq 0} \sigma^i(L).
\end{eqnarray*}

\begin{definition}[P\u{a}un splicing language]
Given a splicing system $H=(A,I,R)$,
the language
$L(H)=\sigma^*(I)$ is the language
generated by $H$.
A language $L$ is $H$
{\em generated} (or is a {\em  P\u{a}un splicing language})
if a splicing
system $H$ exists such that $L=L(H)$.
\end{definition}

We have adopted the more realistic operation of splicing defined
by taking into account both of the two possible words obtained by recombination
and starting with two words and a rule.
This operation is also known as 2-splicing.
A different definition can be obtained when we take into account only one word
(1-splicing). Relations between the computational power of splicing
systems with 2-splicing and splicing systems with 1-splicing can be found in
\cite{book,vz}.

In order to characterize regular languages generated by finite
splicing systems, some partial results have been provided by
considering (realistic) additional hypotheses or
suitable restrictions on splicing rules. As an example, the
reflexive hypothesis, or symmetry. We recall that
$R$ is
{\it reflexive} if for each
$u_1 \# u_2 \$ u_3 \# u_4$ in $R$, we have
$u_1 \# u_2 \$ u_1 \# u_2$ and $u_3\#u_4 \$ u_3 \#u_4 \in R$.
$R$ is {\it symmetric} if for each
$u_1 \# u_2 \$ u_3 \# u_4 \in R$, we have
$u_3 \# u_4 \$ u_1 \#u_2 \in R$.
Observe that 2-splicing is equivalent to 1-splicing plus the symmetric hypothesis on $R$.

\section{Simple Systems: the Origins and Stimulated (Related) Results} \label{sh}

The splicing operation was explicitly linked with the concept of
constant already by Head \cite{h87}, in his seminal paper.
Indeed, it is evident the similarity between a constant
and a crossing in Head's definition.
Head proved that {\em persistent} splicing languages
coincide with
Strictly Locally Testable (SLT) languages.
In addition, he proved that
SLT languages may be generated by systems such that $B=C=A^k$ , for $k \geq 1$
(uniform splicing systems or Null Context H-systems, NCH systems).
To do this, he used the result of
De Luca and Restivo (1980) showing that a language $L$ is SLT if and only if
there is an integer $k$ such that all strings in $A^k$
are constants with respect to $L$.

In \cite{h98slt}, Head gave different characterizations of the family of SLT languages,
pointing out that the class of SLT
languages itself is the union of the
families of languages generated by a special hierarchy of SH systems, splicing systems
which are a subclass of
NCH systems in \cite{mat}
(each crossing of a triple is a letter).
He gave a procedure which, for a regular language $L$,
determines whether $L$ is SLT.  When $L$ is SLT,
this procedure specifies constructively
the smallest family in the hierarchy containing $L$.

The restrictive class of {\em simple systems (SH)} $G$,
was explicitly introduced in \cite{mat}, based on rules
of the form $a\#1\$a\#1$, $a \in A$, i.e.,
splicing is allowed on every position
where such a symbol (marker) appears.
Clearly each language $L(G)$ is regular
and since they are special NCH systems, we have that $L(G)$ is SLT.
A characterization of languages in SH is also provided and,
in the case of unary languages, they have a very simple
regular expression ($L=a^*$ or $L=a^+$).
SH systems were subsequently studied in \cite{ceterchi},
also  by considering different positions of the letter $a$ inside a rule.

In 2001,
SH systems were generalized by considering {\em semi-simple splicing
systems SSH}, where all rules have the form
$a_i\#1\$a_j\#1$, $a_i,a_j \in A$ \cite{ssss}.
Also in this case, four types of rules can be considered,
depending on the position of the two letters.
In \cite{ssss} only (1,3)-SSH are considered, i.e.,
when all rules have the form $a_i\#1\$a_j\#1$, $a_i,a_j \in A$ ,
and the main result
is a characterization of semi-simple splicing languages in terms
of certain directed graphs.
Using this, the authors proved
that all semi-simple splicing languages must have a constant
word.
By applying one of Head's results, semi-simple splicing languages are SLT languages \cite{ssss}.
The algebraic characterization of simple splicing languages is extended to semi-simple
splicing languages in \cite{ceterchi}.
Both in the initial paper about simple systems \cite{mat},
and later by Head (who gave the name of $k-$fat [semi-]simple H systems) splicing rules of the form
$x\#1\$x\#1$ were considered, with $|x| \leq k$ , for a given constant $k$.
In \cite{ceterchi2004}, $k-$fat semi-simple splicing systems were investigated
both for linear (and circular) strings.
These systems are a particular case of splicing systems with {\em one-sided context},
i.e., each rule has the form $u\#1\$v\#1$ or $1\#u\$1\#v$
and $R$ is reflexive.
Head stated again a relationship between splicing and constants
and proved that it is decidable whether a regular language is generated by
one-sided context splicing systems, but only when the rules are either $u\#1\$v\#1$ or $1\#u\$1\#v$ \cite{hone}.

\section{Computational Power and Decidability Questions for Linear Splicing} \label{mainlinear}

As already said, the
class of languages generated by finite splicing systems is included
in the class of regular languages. This result was firstly proved by
Culik II and Harju \cite{ch,hp06,pixDAM}.
Gatterdam  gave $(aa)^*$ as an example of a regular language which cannot
be generated by a splicing system. Thus, the class of languages generated by splicing systems is
strictly included in the class of regular languages \cite{gat}.
However, for any regular language $L$ over an
alphabet $A$, by adding a marker $b \not \in A$ to the left side of every word in $L$ we obtain the language $bL$
which can be generated by a splicing system. For instance,
the language $b(aa)^*$ is generated by
$I=\{b, baa\}$ and the rule $baa \# 1 \$ b \# a$ \cite{hone,vz}.
This led to the question of whether or not one of
the known subclasses of the regular languages
corresponds to the class of languages which can be generated by a splicing system.
In turn, we are faced with the problem of finding a characterization
of the latter class (see \cite{linDAM} for a construction of a subclass
of splicing languages). All
investigations to date indicate
that the splicing languages form a class that does not coincide with another naturally defined
language class.

A characterization of languages generated by {\em reflexive splicing systems} using
constants has
been given in \cite{b,rifl}.
A splicing system is reflexive if $R$ is reflexive.
Recently, it was proven that every splicing language
has a constant \cite{bj}.
However, not all languages which have a constant are generated by splicing
systems. For instance, in the language $L = (aa)^* + b^*$
every word $b^i$ is a constant, but $L$ is not generated
by a splicing system.

Another approach was to find an algorithm which decides whether a given regular language
is generated by a splicing system.
This problem has been investigated and partially solved in \cite{gp,hp06}: it is decidable whether a regular language
is generated by a reflexive splicing system. It is worth mentioning that a splicing system by the
original definition in \cite{h87} is always reflexive.
A related problem has been investigated:
given a regular language $L$ and a finite set of enzymes, represented by a set of reflexive rules $R$,
it is decidable whether or not $L$ can be generated from a finite set of axioms by using
only rules from $R$ \cite{kim}.
In \cite{kk} the authors settle the decidability problem, by proving that for a given regular language,
it is indeed decidable whether the language is generated by a splicing system (which is not necessarily
reflexive). The proof is constructive, i.e.,
for every regular language $L$ they prove that there exists a splicing system
$(I_L,R_L)$ and if $L$ is a splicing language,
then $L$ is generated by the splicing system $(I_L,R_L)$. The
size of this splicing system depends on the size $m$ of the syntactic monoid of $L$.
All axioms in $I_L$ and the four components of every rule in $R_L$ have
length in $O(m^2)$.
By results from \cite{hp06,hpg}, one can construct a finite automaton which
accepts the language generated by $(I_L,R_L)$. Then, by
comparing it with a finite automaton which accepts $L$,
we can decide whether $L$ is generated by a splicing system.

\section{Circular Words and Languages} \label{cwdef}

Given $w \in A^*$, a circular word
$\1 w$ is the equivalence class of $w$ with respect to
the {\em conjugacy} relation $\sim$ defined by $xy \sim yx$,
for $x,y \in A^*$ \cite{lot}.
The notations $|\1 w|$, $|\1 w|_a$,
$alph(\1 w)$ will be defined
as $|w|$, $|w|_a$, $alph(w)$,
for any representative $w$ of $\1 w$.
Analogously, we define the {\it reversal} $\1 w^{Rev}$ of the
circular word $\1 w$ by $\1 w^{Rev} = \1 (w^{Rev})$.
Notice that $\1 w^{Rev}$ does not depend on which
representative in $\1 w$ we choose to define it by.
When the context does not make it ambiguous,
we will use the notation $w$ for a circular word
$\1 w$.
For a word $w$, we set
$Fact_c(w) = \{ x \in A^+ \mid \exists w' \sim w : x \in Fact(w')\}$.
Let $\1 A^*$ denote the set of all circular words
over $A$, i.e., the quotient of $A^*$ with respect to $\sim$.
Given $L \subseteq A^*$, $\1 L=\{ \1 w ~|~ w \in L\}$
is the {\it circularization} of $L$ whereas, given
a {\it circular language} $C \subseteq \1 A^*$,
every $L \subseteq A^*$ such that
$\1 L=C$ is a {\it linearization} of $C$.
In particular, a linearization of $\1 w$ is a linearization
of $\{\1 w \}$, whereas the {\em full linearization}
$Lin(C)$ of $C$ is defined by $Lin(C)=
\{w \in A^* ~|~ \1 w \in C \}$.
Notice that, given $L \subseteq A^*$,
the notation $\1 L^*$ is unambiguous
(and means $\1 (L^*)$). The same holds for $\1 L^+$.
Furthermore, we will often write $\1 w$
instead of $\{\1 w \}$ and $L$
instead of $\1 L$, for a set of letters
$L \subseteq A$. Given a family of languages
$FA$ in the Chomsky hierarchy, $FA^\sim$ is the set of all
those circular languages $C$ which have some linearization
in $FA$. Thus $Reg^\sim$ is the class of circular languages $C$
such that  $C = \1 L$ for some $L \in Reg$.
If $C \in Reg^\sim$ then $C$ is a {\it regular circular language}.
Analogously, we can define {\it context-free}
(resp. {\it context-sensitive})
circular languages.
It is classically known that given a regular
(resp. context-free, context-sensitive) language $L \subseteq A^*$,
$Lin(\1 L)$ is regular (resp. context-free, context-sensitive) \cite{hum,kud}.
As a result, a circular language $C$ is
regular (resp. context-free, context-sensitive)
if and only if $Lin(C)$ is a
regular (resp. context-free, context-sensitive)
language \cite{hb}.

\section{Circular Splicing} \label{cdef}

\subsection{P\u{a}un Circular Splicing Systems}

As in the linear case, there are different definitions
of the circular splicing operation.
In this paper we deal with the definition of
this operation given in \cite{hb}.
The corresponding circular splicing systems are named
here {\em P\u{a}un circular splicing systems} since
they are the counterpart of P\u{a}un linear
splicing systems in the circular context.

\smallskip
{\bf \tt P\u{a}un's definition \cite{hb}.}
A {\em P\u{a}un circular splicing system} is a triple $S = (A,I,R)$,
where $A$ is a finite alphabet, $I$ is the {\em initial} circular
language, with $I \subseteq \1 A^*$ and $R$ is the set
of the {\em rules}, with $R \subseteq A^* \# A^* \$ A^* \# A^*$
and $\#, \$ \not \in A$. Then, given a rule
$r=u_1 \#u_2 \$ u_3 \# u_4$ and circular words
$\1 w'$, $\1 w''$, $\1 w$, we set $(\1 w', \1 w''){\vdash}_r \1 w$
if there are linearizations $w'$ of $\1 w'$,
$w''$ of $\1 w''$, $w$ of $\1 w$ such that $w'= u_2xu_1$,
$w''= u_4yu_3$ and $w = u_2xu_1u_4yu_3$.
If $(\1 w', \1 w''){\vdash}_r \1 w$ we say that $\1 w$ is
generated (or spliced) starting with
$\1 w'$, $\1 w''$ and by using a rule $r$. We also say
that $u_1u_2$, $u_3u_4$ are {\em sites} of
splicing and we will use $SITES(R)$ to denote the
set of sites of the rules in $R$.
\medskip

From now on, ``splicing system'' will be synonymous with
``circular P\u{a}un splicing system''. Furthermore,
a {\it finite} splicing system $S = (A,I,R)$
is a circular splicing system with both $I$ and
$R$ finite sets. We will now give the definition of
circular splicing languages.
Given a splicing system  $S$ and a circular
language $C \subseteq \1 A^*$, we set
$\sigma'(C)=\{w \in
\1 A^* ~|~ \exists w', w'' \in C, \exists r \in R : \;
(w',w''){\vdash}_r ~w \}$. Then, we define
$\sigma^0(C)=C$,
$\sigma^{i+1}(C) = \sigma^i(C) \cup \sigma'(\sigma^i(C))$, $i \geq 0,$
and $\sigma^*(C) = \bigcup_{i \geq 0} \sigma^i(C)$.

\begin{definition} [Circular splicing language]
Given a splicing system $S$, with
initial language $I \subseteq \1 A^*$,
the circular language
$L(S)=\sigma^*(I)$ is the {\em language
generated} by $S$.
A circular language
$C$ is {\em P\u{a}un generated}
(or $C$ is a {\em (circular) splicing language})
if a splicing system $S$
exists such that
$C=L(S)$.
\end{definition}

\begin{example} \cite{rairo}
The regular language
$L =\{w \in A^* ~|~ \exists h,k \in \N
~ |w|_a=2k, |w|_b=2h \}$
is the full linearization of the splicing language
generated by $S = (A,I, R)$, where
$A = \{a,b \}$,
$I = \1 \{1, aa, bb, abab \}$ and
$R = \{ 1 \# 1\$ 1 \# aa, 1 \# 1 \$ 1 \# bb,$ $1 \# 1\$ 1\#abab, 1 \#1 \$ 1\# baba \}$.
\end{example}

As observed in \cite{hb}, we may assume that
the set $R$ of the rules in a splicing system
$S = (A,I,R)$ satisfies additional conditions,
having also a biological counterpart.
Namely, we may assume that $R$ is
reflexive or $R$ is symmetric (see Section \ref{lsplicing}
for these definitions).
We do not assume that $R$ is reflexive.
On the contrary, we notice that, in view of
the definition of circular splicing, if
$(w',w''){\vdash}_r ~ w$, with
$r = u_1 \# u_2 \$ u_3 \# u_4$, then
$(w'',w'){\vdash}_{r'} ~ w$, with
$r' = u_3 \# u_4 \$ u_1 \#u_2$. Consequently,
$L(S) = L(S')$, where $S' = (A,I,R')$ and
$R' = R \cup \{u_3 \# u_4 \$ u_1 \#u_2 ~|~
u_1 \# u_2 \$ u_3 \# u_4 \in R \}$. Hence,
in order to find a characterization of the
circular splicing languages, there is no loss of
generality in assuming that $R$ is symmetric.
Thus, in what follows, we assume that
$R$ is symmetric (and we do not consider this
assumption as an additional condition). However,
for simplicity, in the examples of P\u{a}un systems,
only one of either $u_1 \# u_2 \$ u_3 \# u_4$
or $u_3 \# u_4 \$ u_1 \#u_2$ will be reported
in the set of rules.

We recall that the original definition of
circular splicing was proposed by
Head \cite{h92}. He defined circular splicing
as an operation on two circular words
$\1 ypxq$, $\1 zuxv \in \1 A^*$
performed by two triples $(p,x,q)$,$(u,x,v)$
and producing $\1 ypxvzuxq$.
The word $x$ is called a {\em crossing} of the triple.
A {\em Head circular splicing system}
$S = (A,I,T,P)$ is defined by giving
a finite alphabet $A$, the initial set
$I \subseteq \1 A^*$, the set $T$ of triples,
$T \subseteq A^* \times A^* \times A^*$,
and where $P$ is a binary relation on $T$ such
that, for each $(p,x,q), (u,y,v) \in T$,
$(p,x,q)P(u,y,v)$ if and only if $x=y$.
Another definition of circular splicing
has been given by Pixton \cite{pixDAM}. In his scheme
circular splicing is performed on two circular
words $\1 w' = \1 \alpha \epsilon$,
$\1 w'' = \1 \alpha' \epsilon'$,
by using $r = (\alpha, \alpha' ; \beta)$,
$\overline {r} = (\alpha', \alpha;\beta')$
and producing
$\1 w = \1 \epsilon \beta \epsilon' \beta'$.
A {\em Pixton circular splicing system}
$S = (A, I, R)$ is defined by giving
a finite alphabet $A$, an initial set
$I \subseteq \1 A^*$ and a set $R$ of rules,
$R \subseteq A^* \times A^* \times A^*$.
Obviously, the counterpart of the notion of a
reflexive (resp. symmetric) set of rules
can be defined for (and added to) Head and Pixton systems
as the notions of the corresponding generated circular
languages.

We also recall that in the original definition of circular
splicing given in \cite{hb}, rules in $R$
could be used in two different ways \cite{hb}.
One way has been described above
while we recall the other, known as
{\it self-splicing}, below.
Self-splicing has also a biological counterpart \cite{hb}.
It introduces a different semantics
of how rules are used. While in the case of the
circular splicing operation, two words
are pasted together to form a new circular word,
in the case of the self-splicing operation, a single
circular word gives rise to two circular words.
The precise definition is given below.

\smallskip
{\bf \tt Self-splicing.}
Let $S = (A, I, R)$ be a splicing system.
Then, given a rule $u_1 \# u_2 \$ u_3 \#u_4$
and a circular word $\1 w$, we set
$\1 w ~ {\vdash}_r ~ (\1 w', \1 w'')$
if there are linearizations $w$ of $\1 w$,
$w'$ of $\1 w'$, $w''$ of $\1 w''$
such that $w = xu_1u_2yu_3u_4$,
$w' = u_4xu_1$ and $w'' = u_2yu_3$. If
$\1 w ~ {\vdash}_r ~ (\1 w', \1 w'')$
we say that $(\1 w', \1 w'')$ is generated starting
with $\1 w$ and by using self-splicing with a rule $r$.

\subsection{The Computation Power of Circular Splicing Systems}

The computational power of circular splicing systems
depends on (a) whether $R$ is reflexive,
(b) self-splicing is taken into account,
(c) which of the three definitions (Head's,
P\u{a}un's or Pixton's definition) is considered,
(d) the level in the (circular) Chomsky hierarchy
the initial set $I$ and the set $R$ of the rules belong to.

The problem of comparing the computational power of
the three definitions of circular splicing was tackled
in \cite{rairo}, where the authors proved that
computational power increases when we substitute Head
systems with P\u{a}un systems. Pixton systems seem to have
a computational power greater than P\u{a}un systems but
this is still an open question.

It is known that if $S = (A,I,R)$
is a P\u{a}un or a Pixton circular splicing system such that
$I \in FA^\sim$, where $FA$ is a full abstract
family of languages which is closed under cyclic closure,
$R$ is a finite, reflexive and symmetric
set of rules and self-splicing is used,
then $L(S) \in FA^\sim$ \cite{hb,pixAFL}.
In particular this result applies when $I$ is
a regular (resp. context-free,
recursively enumerable) circular language.
However, the problem of characterizing the corresponding
generated circular languages remains open in all these cases.

Unless differently stated, in this paper we deal with finite
P\u{a}un systems without the reflexivity assumption
and where the self-splicing operation is not
allowed, and with the corresponding class
of generated circular languages. It is known that this class
is incomparable with the class of regular circular languages.
Indeed, $\1 (a^2)^*a$ and $\1 ((A^2)^* \cup (A^3)^*)$ are examples
of regular circular languages which are not splicing languages
(for the latter language, this remains true even if we choose
Pixton systems) \cite{rairo,damCir}.
On the other hand, a non-context-free splicing language has been
exhibited in \cite{flat}. Moreover, in the same paper the authors
proved that splicing languages are all
context-sensitive and that it remains true even if $I$ is assumed
to be context-sensitive. These results will be thoroughly discussed in
Section \ref{Sflat}.

\subsection{Decidability Questions}\label{decid-question}

As for linear systems, the following decidability questions may be
asked. In the circular case they are
all still open.

\begin{problem}[P1] \label{P1}
Given a splicing system $S$, can we decide whether the corresponding
generated language $L(S)$ is
regular?
\end{problem}

\begin{problem}[P2] \label{P2}
Given a regular language $L$, closed under the conjugacy
relation, can we decide whether $L$ is the full linearization
of a splicing language?
\end{problem}

A related problem has been solved in \cite{flat}
(see Section \ref{Sflat}, Theorem \ref{theodecidable}).

\begin{problem}[P3] \label{P3}
Can we characterize the structure of the regular circular languages
which are splicing languages?
\end{problem}

The above problems have been solved for unary languages (see
Section \ref{one}).
Moreover they may be tackled
for special classes of splicing systems, namely alphabetic, marked and complete
systems (see Sections \ref{Sflat} -- \ref{Scomplete}).
The known results are summarized
in the following table. For each of the above problems $P1, P2, P3$,
the array below
indicates whether the answer is positive for the corresponding class of
splicing systems.

\bigskip
\bigskip

\begin{tabular}{|l|c|c|c|c|}\hline
 & $\Card(A) = 1$ &		
 alphabetic & marked & complete \\ \hline		
	P1		&  yes & ? & yes & yes 	\\ \hline		
	P2		&	yes	& yes & yes & ? \\ \hline
P3		&	yes	& ? & yes & ? \\ \hline
\end{tabular}

\subsection{The Case of a One-letter Alphabet} \label{one}

Unary languages are the simplest case that we can investigate when considering
Problems \ref{P1}--\ref{P3}.
As recalled below, the class of the P\u{a}un generated
languages on a one-letter alphabet
is a proper subset of the class of regular (circular) languages
\cite{rairo}.
In the following proposition, $\Z / n \Z$ denotes the cyclic group of order $n$
and, for $G \subseteq \N$,
we set $a^G = \{ a^g ~|~ g \in G \}$.

\begin{proposition} \label{CaPa1}
A subset $L= \1 L$ of $a^*$ is
P\u{a}un generated if and only
if either $L$
is a finite set
or there exist a finite subset $L_1$ of $a^*$,
positive integers $p, r, n$,
with $n = pr \geq 2$ and a subgroup
$G' = \{ pk ~|~ k \in \N, ~
0  \leq k \leq r - 1 \}$
of $\Z / n \Z$
such that
$L = L_1 \cup
(a^G)^+$, where
$G = G' \! \! \! \pmod{n}$,
and $\max \{\ell ~|~ a^{\ell} \in L_1 \} <
n = \min \{m ~|~ m \in G\}$.
\end{proposition}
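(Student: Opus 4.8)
The plan is to pass to word lengths. Over a one-letter alphabet $\1 a^m$ is determined by $m$, so I identify $L$ with $\mathcal L=\{m\mid \1 a^m\in L\}\subseteq\N$, the initial set $I$ with a finite set $\mathcal I\subseteq\N$, and each rule $u_1\#u_2\$u_3\#u_4$ with the pair $(s_1,s_2)=(|u_1u_2|,|u_3u_4|)$ of its site lengths. In these terms $(\1 w',\1 w''){\vdash}_r\1 w$ is possible exactly when $|w'|\ge s_1$ and $|w''|\ge s_2$, and then $|w|=|w'|+|w''|$. Hence $\mathcal L=L(S)$ is the smallest subset of $\N$ that contains $\mathcal I$ and is closed under $(m',m'')\mapsto m'+m''$ whenever some rule $(s_1,s_2)$ satisfies $m'\ge s_1$, $m''\ge s_2$. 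If no rule is ever admissible then $\mathcal L=\mathcal I$ is finite, which gives one branch of the statement; so for the forward implication I may assume $\mathcal L$ infinite, hence also $R\ne\emptyset$.

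The key first step of the forward direction is that, since an infinite subset of $\N$ is unbounded, \emph{every} rule $(s_1,s_2)$ eventually becomes admissible (there are always two elements of $\mathcal L$ exceeding both $s_1$ and $s_2$), so $\mathcal L$ is in fact closed under $(m',m'')\mapsto m'+m''$ for every rule, with no reachability restriction. Writing $\Sigma$ for the largest site length in $R$, a length below $\Sigma$ can only arise from two lengths below $\Sigma$, so $\mathcal L\cap[0,\Sigma)$ is finite; and $P:=\mathcal L\cap[\Sigma,\infty)$ is nonempty and satisfies $P+P\subseteq P$, because two elements of $P$ dominate every site length, so some rule applies and their sum is again $\ge\Sigma$. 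Thus $P$ is a subsemigroup of $(\N,+)$, and the classical structure theorem for such semigroups gives $p:=\gcd(P)$ with $P\subseteq p\N$ and $p\N\cap[c,\infty)\subseteq P$ for some threshold $c$ (realise the gcd by finitely many elements of $P$ and apply the Frobenius/numerical-semigroup fact).

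Now I choose $n\ge 2$, a multiple of $p$, with $n\ge\Sigma$ and $p\N\cap[n,\infty)\subseteq P$. For $m\ge n$ this yields $m\in\mathcal L\iff p\mid m$, so $\mathcal L=L_1\cup\{m\ge n\mid p\mid m\}$ with $L_1:=\mathcal L\cap[0,n)$ finite of maximum $<n$. Putting $r=n/p$, the set $G'=\{pk\mid 0\le k\le r-1\}$ is the order-$r$ subgroup of $\Z/n\Z$, and taking $G$ to be the representatives of $G'$ in $[n,2n)$, namely $G=\{n,n+p,\dots,n+(r-1)p\}$ (so $\min G=n$), the set of lengths occurring in $(a^G)^+$ is $\{m\ge n\mid p\mid m\}$: dividing through by $p$ this is the elementary fact that $\{r,r+1,\dots,2r-1\}$ generates the numerical semigroup $\N_{\ge r}$. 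Hence $L=L_1\cup(a^G)^+$ with all the required constraints.

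For the converse I just exhibit a system. When $L$ is finite, take $I=L$ and a single rule whose two site lengths exceed $\max_{w\in L}|w|$: no splicing is ever possible, so $L(S)=I=L$. When $L=L_1\cup(a^G)^+$ is of the stated form, take $I=L_1\cup\{a^g\mid g\in G\}$ and the (already symmetric) rule $a^n\#1\$a^n\#1$. Since every word of $L_1$ is shorter than $n$, only the words $a^g$ with $g\in G$ are long enough to be spliced, so the generated lengths are $\mathcal I$ together with the numerical semigroup generated by $G$, which is again $\{m\ge n\mid p\mid m\}$ and contains $G$; hence $L(S)=L_1\cup\{m\ge n\mid p\mid m\}=L$. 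I expect the only delicate point to be the upgrade at the start of the forward direction, from the reachable closure to the unrestricted one — this is what makes $P$ a semigroup and forces the eventual arithmetic periodicity with period $p$; the remaining numerical-semigroup computations and the bookkeeping needed to match the exact parametrisation $(p,r,n,G',G)$ of the statement are routine.
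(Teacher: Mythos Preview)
The paper does not actually prove Proposition~\ref{CaPa1}; it is quoted from \cite{rairo} and only illustrated by an example. So there is no proof in this paper to compare against, and your argument stands or falls on its own. On the whole it is correct: the passage to lengths is exact over a unary alphabet, the key observation that $P=\mathcal L\cap[\Sigma,\infty)$ is an additive sub-semigroup of $\N$ once $\mathcal L$ is infinite is right (any rule applies to two elements of $P$ since both exceed every site length), and the numerical-semigroup structure then forces eventual periodicity with period $p=\gcd(P)$. Your choice of $n$ as a large multiple of $p$ above both $\Sigma$ and the Frobenius threshold is exactly what is needed.

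One point needs adjustment in your converse. You implicitly read $G$ as the specific set $\{n,n+p,\dots,n+(r-1)p\}$, and for that set it is true that $\langle G\rangle=\{m\ge n:p\mid m\}$. But the paper's own example takes $n=6$, $p=2$ and $G=\{6,14,16\}$, showing that in the intended statement $G$ may be \emph{any} system of representatives of $G'$ modulo $n$ with $\min G=n$; for such a $G$ the semigroup $\langle G\rangle$ can be strictly smaller than $\{m\ge n:p\mid m\}$ (here $8,10\notin\langle 6,14,16\rangle$). Your construction $I=L_1\cup a^G$, $R=\{a^n\#1\$ a^n\#1\}$ is still correct in that generality, but for a simpler reason than you wrote: the lengths occurring in $(a^G)^+$ are exactly the semigroup $\langle G\rangle$, and the lengths produced by your system are exactly $L_1\cup\langle G\rangle$, so $L(S)=L_1\cup(a^G)^+=L$ directly. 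Drop the intermediate identification with $\{m\ge n:p\mid m\}$ in that paragraph and the argument is clean.
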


For example, for the regular language $L = \{a^3,a^4\} \cup \{a^6,
a^{14}, a^{16}\}^+$, we have $L = L(S)$,
where $S=(\{a \},I,R)$,
$I=\{a^3,a^4, a^6, a^{14}, a^{16}\}$ and
$R = \{a^6 \# 1 \$1 \# a^6\}$. Here $L_1 = \{a^3,a^4\}$,
$n = 6$ and $G' = \{0, 2, 4 \}$.
Of course, the above result provides a solution to Problems
\ref{P1} and \ref{P3}. A positive answer to Problem \ref{P2}
has been given in \cite{damCir}.
This result is obtained by a characterization of
the minimal finite state automaton recognizing regular
splicing languages on
a one-letter alphabet.
We end this section with a
result concerning
the descriptional complexity
of a circular splicing system
which generates
a circular language $L \subseteq a^*$.

\begin{proposition} \label{DC}
Let $L \subseteq a^*$ be a P\u{a}un generated language.
Then, there exists a (minimal) splicing system
$(\{a\},I,R)$ generating $L$ with either
$R= \emptyset$ or $R= \{ r \}$
containing only one rule.
\end{proposition}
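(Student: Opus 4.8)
The plan is to branch on the dichotomy furnished by Proposition~\ref{CaPa1} and, in each branch, exhibit explicitly a system with at most one rule; I read ``minimal'' as minimizing the number of rules, so I also check that zero rules suffice exactly when $L$ is finite. If $L$ is finite, I would take $I=L$ itself as set of axioms and $R=\emptyset$: the splicing step $\sigma'$ is then empty, so $L(S)=\sigma^*(I)=I=L$, and no system can use fewer than zero rules.

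If $L$ is infinite, Proposition~\ref{CaPa1} writes $L=L_1\cup(a^G)^+$ with $L_1$ finite, $G$ a finite set of integers all $\ge n$, $n=\min G$ and $\max\{\ell\mid a^\ell\in L_1\}<n$; in particular $a^n\in a^G$ since $n\in G$. I would put $I=L_1\cup a^G$ and $R=\{r\}$ with the single rule $r=a^n\#1\$1\#a^n$. The computation to be carried out is that, over a one-letter alphabet, $r$ applied to $a^{m_1}$ and $a^{m_2}$ is applicable exactly when $m_1,m_2\ge n$ (so that each of the two circular words has a linearization of the prescribed shape, namely $a^{m_i-n}a^n$), and then it yields precisely $a^{m_1+m_2}$, because $w=u_2xu_1u_4yu_3=a^{m_1-n}a^na^na^{m_2-n}=a^{m_1+m_2}$. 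So $r$ realises addition of lengths, restricted to lengths at least $n$.

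It then remains to show $L(S)=L$. For $L(S)\subseteq L$: every word of $I$ lies in $L$ (as $L_1\subseteq L$ and $a^G\subseteq(a^G)^+\subseteq L$), and since every word of $L_1$ is strictly shorter than $n$ it can never serve as an input of $r$; hence, by induction on the number of splicing steps, every word produced by iterated splicing has length a sum of finitely many (at least one, with repetitions) lengths of words of $a^G$, and so lies in $(a^G)^+$. For $L\subseteq L(S)$: one has $L_1\subseteq I\subseteq L(S)$, and an induction on $k$ shows $a^{g_1+\dots+g_k}\in L(S)$ for every $k\ge1$ and $g_1,\dots,g_k\in G$, the inductive step applying $r$ to $a^{g_1+\dots+g_{k-1}}\in L(S)$ and $a^{g_k}\in I$, which both have length $\ge n$; this exhausts $(a^G)^+$. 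Finally, an infinite $L$ cannot be generated with $R=\emptyset$ (for then $L(S)=I$ would be finite), so one rule is indeed minimal.

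I expect the argument to be short, the one delicate point being the faithful reading of Proposition~\ref{CaPa1}: precisely, that $a^n$ occurs among the generators $a^G$ and that every word of $L_1$ is strictly shorter than $n$ --- this is exactly what blocks the production of spurious short words and confines $L(S)$ to $L_1\cup(a^G)^+$ --- together with the routine bookkeeping identifying the additive closure of $G$ with the set of lengths occurring in $(a^G)^+$. Neither of these is a genuine obstacle.
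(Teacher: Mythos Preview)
Your argument is correct and matches the construction the paper itself uses: although no explicit proof of Proposition~\ref{DC} is written out, the worked example immediately following Proposition~\ref{CaPa1} exhibits precisely your system $I=L_1\cup a^G$ with the single rule $a^n\#1\$1\#a^n$ (there with $n=6$), so your proof merely makes explicit the verification the paper leaves to the reader. Your handling of the two delicate points --- that $n=\min G$ lies in $G$ and that all words of $L_1$ are strictly shorter than $n$, so they never trigger the rule --- is exactly what is needed.
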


\section{Flat Splicing}\label{Sflat}

\subsection{Definitions and First Examples}

Flat splicing systems are of interest for
proving language-theoretic results because they allow us to separate
operations on formal languages and grammars from the operation of
circular closure (circularization). It appears that proofs for linear
words are sometimes simpler because they rely directly on standard
background on formal languages.

Note that in this section, the initial languages of the splicing systems
considered are not always finite, however the sets of splicing rules
remain always finite. Note also that
most of the results in this section have their counterpart in
circular splicing.

A \emph{flat splicing system} is a
triplet ${\cal S}=(A,I, R)$, where $A$ is an alphabet, $I$ is a set of
words over $A$, called the {\em initial set}, and $R$ is a finite set
of {\em splicing rules}, which are quadruplets
$\fagnotrul{\alpha}{\beta}{\gamma}{\delta}$ of words over $A$.
The words $\alpha, \beta, \gamma$ and $\delta$ are
called the {\em handles} of the rule.

Let $r = \fagnotrul{\alpha}{\beta}{\gamma}{\delta}$
(or $\alpha \# \beta \$ \delta \#\gamma$) be a splicing
rule. Given two words $u = x \alpha\cdot\beta y$ and $v = \gamma z
\delta$, applying $r$ to the pair $(u,v)$ yields the word $w= x
\alpha\cdot\gamma z\delta\cdot \beta y $. (The dots are used only to
mark the places of cutting and pasting, they are not parts of the
words.)  This operation is denoted by $\fagnotproduc{r}{u}{v}{w}$ and is
called a {\em production}. Note that the first word (here $u$) is
always the one in which the second word (here $v$) is inserted.

\begin{example} \label{exemple_simples}
1.  Consider the splicing rule $r=\fagnotrul{ab}{c}{aa}{b}$. We have the
  production $\fagnotproduc{r}{bab \cdot cc}{aaccb}{bab \cdot aaccb \cdot cc
  }$.

2. Consider the splicing rule $\fagnotrul{b}{b}{a}{a}$. Note that we  cannot
produce the word
$b\cdot a \cdot b$ from the word $b \cdot b$ and the singleton $a$,
because the rule requires that the inserted word has at least two
letters. On the contrary, the rule
$\fagnotrul{b}{b}{1}{a}$ does produce the word $bab$ from the
words $bb$ and $a$.

3. For the rule $r=\fagnotrul{1}{b}{a}{a}$, the production
 $\fagnotproduc{r}{\cdot bbc}{aba}{aba\cdot bbc}$,
is in fact a concatenation.

4. As a final example, the rule   $\fagnotrul{1}{1}{1}{1}$
permits all insertions (including concatenations) of a word into another one.
\end{example}

\begin{remark} \label{concat_insert}
As we can see, the ${1}$ in the rules permit the insertion of a one-lettered word
(when there is at least one ${1}$ in the middle of the rule, as it is the case in
the second item  of the previous example) or the  concatenation of two words
(when there is at least one ${1}$ at  the beginning or the end of  the rule, as it is the case in
the third item  of the previous example).
\end{remark}

As for  other forms of splicing,
the \emph{language generated} by the flat splicing system ${\cal S}
=(A,I, R)$, denoted $\fagnotlang{{\cal S}}$, is the smallest language $L$
containing $I$ and closed by $R$.

\begin{example} \label{sir_ex} Consider the splicing system over $A =
  \{a, b\}$ with initial set $I= \{ab\}$ and the unique splicing rule
  $r =\fagnotrul{a}{b}{a}{b}$.  It generates the context-free and
  non-regular language $\fagnotlang{{\cal S}}= \{ a^nb^n \mid n \geq 1\}$.

\end{example}

A splicing system is \emph{finite} (resp. \emph{regular,
  context-free, context-sensitive}) if its initial set is finite
(resp. regular, context-free, context-sensitive).

A  rule $r = \fagnotrul{\alpha}{\beta}{\gamma}{\delta}$ is
\emph{alphabetic} if its four handles $\alpha, \beta, \gamma$ and
$\delta$ are letters or the empty word. A splicing system is
alphabetic if all its rules are alphabetic.


\subsection{Two General Results}


In Section \ref{decid-question}, some decidability questions are asked. In
\cite{flat}, the following result on a similar question is proved.

\begin{theorem}\label{theodecidable}
  Given a regular flat (resp. circular) splicing system ${\cal S}$ and
  a regular language $K$, it is decidable whether $\fagnotlang{{\cal S}} =
  K$.
\end{theorem}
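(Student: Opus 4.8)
The plan is to reduce the equivalence problem $\fagnotlang{{\cal S}} = K$ to a comparison of two effectively constructible finite automata, exploiting the separation between the grammatical operations and the circularization that flat splicing makes possible. First I would handle the flat (linear) case and then transfer it to the circular case by the standard fact, recalled in Section~\ref{cwdef}, that a circular language is regular iff its full linearization is, and that $Lin(\,\cdot\,)$ preserves regularity effectively. So it suffices to prove: given a regular flat splicing system ${\cal S}=(A,I,R)$ with $R$ finite and $I$ regular, the language $\fagnotlang{{\cal S}}$ is effectively regular, i.e.\ one can construct a finite automaton recognizing it.

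The heart of the argument is therefore a \emph{regularity with effectiveness} statement for regular flat splicing systems, analogous to the Culik--Harju / Pixton theorem for linear P\u{a}un splicing but in the flat setting. I would prove it via a saturation/fixed-point argument on automata. Start from a (possibly nondeterministic) finite automaton ${\cal A}_0$ for $I$. A production $\fagnotproduc{r}{u}{v}{w}$ with $r=\fagnotrul{\alpha}{\beta}{\gamma}{\delta}$ takes $u=x\alpha\beta y$ and $v=\gamma z\delta$ and inserts $v$ between the occurrence of $\alpha$ and the occurrence of $\beta$ in $u$. On the automaton level this corresponds to: whenever there is a state $q$ reached by reading some $x\alpha$ and a state $q'$ from which reading $\beta y$ leads to a final state, and whenever $\gamma z\delta$ is accepted, one may ``glue in'' a copy of (a subautomaton for) the $v$-part between $q$ and $q'$. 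The key finiteness observation — the analogue of the constant/bounded-memory phenomenon underlying all these regularity proofs — is that since $R$ is finite, only finitely many handles $\alpha,\beta,\gamma,\delta$ occur, so one only needs to track, in the product of the current automaton with a fixed finite ``handle-recognizer,'' which handle suffixes/prefixes have just been seen. This bounds the set of relevant ``insertion points'' and lets one add finitely many new transitions (new $1$-labelled or $\varepsilon$-transitions connecting the appropriate states, together with marked copies of the subautomata for the accepted $v$'s) so that the resulting automaton is closed under $R$. One must argue that this closure process, although it adds transitions, stabilizes: the state set can be taken fixed (a suitable product construction determined only by ${\cal A}_0$ and the finite handle set), and one only ever adds transitions to a fixed finite transition table, so the monotone process reaches a fixpoint in finitely many steps. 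That fixpoint automaton recognizes exactly $\sigma^*(I)=\fagnotlang{{\cal S}}$, which is hence regular and effectively obtained.

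Once $\fagnotlang{{\cal S}}$ is available as an explicit finite automaton, deciding $\fagnotlang{{\cal S}}=K$ is routine: build automata for $\fagnotlang{{\cal S}}\setminus K$ and $K\setminus\fagnotlang{{\cal S}}$ by the usual product and complementation constructions, and test each for emptiness. For the circular case, given a regular circular splicing system, pass to a regular linearization, observe that the circular splicing operation lifts to a flat-type insertion operation on linearizations modulo conjugacy, apply the linear construction (adding, as is standard, a bounded rotation handling so that insertions across the ``seam'' of a circular word are captured), and finally compare $Lin(L({\cal S}))$ with $Lin(K)$, both effectively regular, via automata. The main obstacle I anticipate is precisely the termination/boundedness step in the saturation: one must set up the product construction carefully so that \emph{no new states} are ever needed and the iterated splicing only enriches a fixed finite transition relation — this is where the finiteness of $R$ (and the resulting ``constant-like'' bounded look-behind/look-ahead at the handles) does the essential work, and where a naive construction would loop forever by repeatedly inserting ever-longer detours. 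Getting the invariant right — that every word in $\sigma^i(I)$ has an accepting run in the fixed automaton whose behaviour at insertion points depends only on the bounded handle information — is the crux; the rest is bookkeeping.
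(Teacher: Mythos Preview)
Your central claim --- that for a regular flat splicing system ${\cal S}$ the language $\fagnotlang{{\cal S}}$ is effectively regular --- is false, and the whole plan collapses with it. Example~\ref{sir_ex} in this very paper is a counterexample: the \emph{finite} flat system with $I=\{ab\}$ and the single rule $\fagnotrul{a}{b}{a}{b}$ generates $\{a^nb^n\mid n\ge1\}$, which is not regular. Your saturation construction therefore cannot terminate in general; the difficulty you flag in your last paragraph (``no new states are ever needed'') is not a bookkeeping issue but a genuine obstruction. You have implicitly imported the Culik--Harju/Pixton regularity theorem, but that theorem is about classical P\u{a}un linear splicing (cut a prefix of one word and a suffix of another and concatenate), whereas flat splicing is an \emph{insertion} operation: the whole of $v$ is inserted inside $u$. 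These are very different operations, and the bounded-look-behind argument that works for the former fails for the latter.

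The approach actually used (in \cite{flat}) does not compute $\fagnotlang{{\cal S}}$ at all. The key structural fact is that a flat production $\fagnotproduc{r}{u}{v}{w}$ satisfies $|w|=|u|+|v|$, so (ignoring the degenerate case $v=1$, which is a no-op) splicing is strictly length-increasing. One then checks the single equation
\[
K \;=\; I \cup \sigma'(K).
\]
Since $\sigma'$ applied to a regular language is effectively regular (insertion of a regular set at positions delimited by the finitely many handles), both sides are regular languages that can be compared. If the equation fails, then $K\neq\fagnotlang{{\cal S}}$: either $I\not\subseteq K$, or $K$ is not closed under $R$, or some $w\in K\setminus I$ is not a one-step splice of two words of $K$ --- in each case $K$ cannot be the least $R$-closed set containing $I$. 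If the equation holds, then $K$ is $R$-closed and contains $I$, so $\fagnotlang{{\cal S}}\subseteq K$; and conversely every $w\in K$ is either in $I$ or is produced from two strictly shorter words of $K$, so by induction on $|w|$ one gets $w\in\fagnotlang{{\cal S}}$, whence $K\subseteq\fagnotlang{{\cal S}}$. The circular case follows by the reduction you describe, but again without ever asserting that $\fagnotlang{{\cal S}}$ is regular. This also explains the otherwise puzzling remark in the paper that equality with a regular $K$ is decidable while \emph{inclusion} in either direction remains open: the fixed-point equation characterizes exactly the equality case and gives no handle on strict inclusions.
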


\begin{remark}
 Note also  that
  it is decidable  whether a regular language can
  be generated by an alphabetic (flat or circular) finite splicing system.
  This fact answers a special case of Problem \ref{P2}.
\end{remark}

The highest level in Chomsky hierarchy which can be
obtained by splicing systems with a finite
initial set  is the context-sensitive level.
This result (proved in \cite{flat}) remains true when  the initial set is context-sensitive.

\begin{theorem}\label{theo-contextsensitive}
  The language generated by a context-sensitive  flat (resp. circular)
  splicing system is context-sensitive.
\end{theorem}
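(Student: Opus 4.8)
The plan is to show that $\fagnotlang{{\cal S}}$ can be recognized by a linear bounded automaton (or equivalently, generated by a context-sensitive grammar), using the fact that each production strictly increases the length of a word unless it is a trivial concatenation with the empty word, and that the only ``dangerous'' productions (those that could cause unbounded blow-up during a verification) can be controlled. First I would reduce to the flat case: by the remark preceding Theorem~\ref{theo-contextsensitive} and the standard transfer between flat and circular splicing, it suffices to treat a context-sensitive flat splicing system ${\cal S} = (A, I, R)$, since $Lin(L(S))$ of a circular system is context-sensitive iff the corresponding flat language is (using the classical fact, recalled in Section~\ref{cwdef}, that $Lin$ preserves the context-sensitive level, together with closure of context-sensitive languages under circularization in both directions).

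Next I would analyze the length behaviour of productions. If $r = \fagnotrul{\alpha}{\beta}{\gamma}{\delta}$ and $\fagnotproduc{r}{u}{v}{w}$, then $|w| = |u| + |v|$, so the length is additive. The key observation is that to generate a word $w$ of length $n$, every word appearing in a derivation tree of $w$ has length at most $n$, \emph{except} possibly for words of the form produced by rules with $1$-handles acting as pure concatenations or one-letter insertions, which cannot make a word longer than $n$ either once we fix the target $w$; more precisely, in any derivation of $w$, the multiset of leaves (axioms from $I$) has total length at most $n$, hence each intermediate word has length $\le n$. The subtlety is that $I$ may be infinite and only context-sensitive, so I cannot simply bound the derivation. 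The plan is therefore: an LBA, on input $w$ with $|w| = n$, guesses a derivation tree of $w$ bottom-up, at each step storing only the current ``active frontier'' of words; since every word in the derivation has length $\le n$ and the tree can be traversed so that the frontier holds at most two words at a time (process the binary insertion tree in a depth-first manner, reusing space), the working space is $O(n)$. To check that a guessed leaf belongs to $I$, the LBA invokes (as a subroutine) the LBA for the context-sensitive language $I$ on that leaf, which again uses space linear in the leaf length $\le n$. Verifying that each internal node is a correct application of some rule $r \in R$ to its two children is a simple pattern-matching check, $R$ being finite and fixed.

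The main obstacle I expect is making the ``bottom-up guess with bounded frontier'' rigorous: a derivation tree of $w$ can be arbitrarily deep (because of iterated splicing), so one must argue that it can be re-encoded so that an LBA verifies it in linear space despite depth possibly exceeding $n$. The clean way is to observe that the set of words of length $\le n$ is finite (of size $(|A|+1)^{\le n}$, hence $2^{O(n)}$), so the LBA can instead compute, by a fixpoint/saturation argument, the set $S_k$ of all words of length $\le n$ derivable in $\le k$ steps — but storing such a set needs exponential space, which is too much. The correct fix is the standard one for $\mathrm{NSPACE}$: an LBA nondeterministically guesses the derivation tree node by node in a carefully chosen order (e.g.\ an Euler tour), keeping on the tape only the path from the root to the current node plus the target $w$; each word on this path has length $\le n$ and the path has length $\le n$ (since lengths strictly decrease going from a non-trivial-insertion parent to either child, and trivial-$1$ steps can be contracted away first by a preprocessing normal form on $R$ that removes rules with $1$ in the middle or at the ends, replacing them by finitely many genuinely-length-increasing rules plus bookkeeping). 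After this normalization every edge strictly increases length, so the tree has depth $< n$ and the whole verification fits in $O(n^2)$ space — still linear-bounded in the sense of nondeterministic context-sensitive recognition once we recall that $\mathrm{NSPACE}(n^{O(1)})$ collapse is not needed: actually $\mathrm{NSPACE}(n)$ suffices because we need only store one word of length $\le n$ and a step counter, re-guessing the subtree. I would therefore present the argument as: normalize $R$ to be length-increasing on every handle; then any derivation of $w$ has size polynomial in $|w|$; then an LBA verifies it with linear workspace by the usual ``guess-and-check a polynomial-size certificate in $\mathrm{NSPACE}$'' technique, invoking the LBA for $I$ as an oracle on leaves. Finally, I would note the circular case follows by the same reduction as all other results quoted from \cite{flat}.
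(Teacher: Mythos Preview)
First, note that the paper does not actually prove Theorem~\ref{theo-contextsensitive}: it only states the result and refers to~\cite{flat} for the proof, so there is no in-paper argument to compare your proposal against.

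Your high-level strategy is the right one: since every flat production $\fagnotproduc{r}{u}{v}{w}$ satisfies $|w|=|u|+|v|$, all intermediate words in a derivation of $w$ have length at most $|w|$, and a nondeterministic LBA can guess a derivation and call the LBA for the context-sensitive language $I$ on the leaves. The reduction of the circular case to the flat one is also fine. But the space analysis, as you have written it, has a genuine gap.

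You claim the derivation tree may be ``arbitrarily deep'' and spend most of the proposal trying to repair this via rule normalisation, ending at an $O(n^{2})$ estimate patched by a vague ``re-guessing the subtree''. The worry is misplaced: the only productions that fail to strictly increase length are those in which one argument is the empty word, and such steps can simply be deleted from any derivation. After that, every leaf is a nonempty word of $I$, the leaves have total length exactly $n=|w|$, so there are at most $n$ leaves and at most $2n-1$ nodes; the depth is bounded by $n$ with no normalisation of $R$ needed. More importantly, your claim that a depth-first traversal keeps ``at most two words at a time'' on the frontier is false (the DFS stack on a binary tree of depth $d$ can hold up to $d$ pending siblings). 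The invariant that actually makes the argument work in one line is that the \emph{total length} of all words currently on the verification stack never exceeds $n$: it starts at $n$ (stack $=\{w\}$); replacing a stacked word $w'$ by its two predecessors $u',v'$ preserves the total since $|u'|+|v'|=|w'|$; and popping a leaf to test it against $I$ only decreases the total. Hence the entire stack always fits in $O(n)$ cells, and together with the linear-space membership test for $I$ on each leaf this already gives an $\mathrm{NSPACE}(n)$ algorithm. With this single invariant stated, the handle normalisation, the $O(n^{2})$ detour, and the ``re-guessing'' hand-wave can all be deleted.
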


We give here an example of a splicing system having a finite initial set
which produces a non-context-free language.

\begin{example}
Let  $A$ be the alphabet $ \{a,b,c,d, \fagnotbb , \fagnotff  \}$ and set $u=abcd$.
Let ${\cal S}= (A,I,R)$ be the flat splicing system
with
\begin{displaymath}
I = \{ \fagnotbb u\fagnotff , a, b, c, d \}
\end{displaymath}
and with $R$ composed of the rules
\begin{align*}
  &a_1 = \fagnotrul{\fagnotbb }{u}{a}{1},
               &\hspace{-1cm}&a_2 = \fagnotrul{au}{u}{a}{1},  \\
  &b_1 = \fagnotrul{a}{u\fagnotff }{b}{1},
         &\hspace{-1cm}&b_2 = \fagnotrul{a}{uabu}{b}{1}, \\
  &c_1 = \fagnotrul{\fagnotbb ab }{u}{c}{1},
          &\hspace{-1cm}&c_2 = \fagnotrul{abcuab}{u }{c}{1} ,\\
  &d_1 = \fagnotrul{abc }{u\fagnotff }{d}{1},
            &\hspace{-1cm}&d_2 = \fagnotrul{abc}{uuu}{d}{1}\,.
 \end{align*}

 This splicing system produces the language
\begin{align*}
  \fagnotlang{{\cal S}}  = &I \cup \{ \fagnotbb  (u)^{2^n}\fagnotff   \mid n \geq 0\} \nonumber\\
  & \cup \{ \fagnotbb  (au)^p(u)^q\fagnotff   \mid p+q = 2^n, n \geq 0\} \nonumber\\
  & \cup \{ \fagnotbb  (au)^p(abu)^q\fagnotff   \mid p+q = 2^n, n \geq 0\} \nonumber \\
  & \cup \{ \fagnotbb  (abcu)^p(abu)^q\fagnotff   \mid p+q = 2^n, n \geq 0\} \nonumber\\
  &\cup\{ \fagnotbb (abcu)^p(uu)^q\fagnotff \mid p+q = 2^n, n \geq 0\}\,.
 \end{align*}

Here is an idea of how this splicing system will produce the word
$\fagnotbb u^{2^{(n+1)}}\fagnotff$ from the word $\fagnotbb u^{2^n}\fagnotff$
by  adding a word $u$ before each of its occurrences. This is done by
inserting first letters $a$ from left to right, then  letters $b$
 from right to left, then letters $c$ from  left to right, and finally
  letters $d$ from right to left.

The intersection of the language $\fagnotlang{{\cal S}}$ with the regular
language $\fagnotbb (u)^*\fagnotff $ is equal to $\{ \fagnotbb (u)^{2^n}\fagnotff \mid n \geq
0\}$. The latter language is not context-free.

Concerning circular splicing systems, if we take the circular splicing system with the
same rules and the same initial language, we get a similar result,
which is not context-free either.

More detailed explanations, for both flat and circular splicing systems,
are provided in \cite{flat}.
\end{example}


\subsection{Alphabetic Splicing Systems}\label{sec:alphabetic}


We recall that a  splicing system is said {\em alphabetic}
if all the handles of all its rules
have length at most one.

The splicing system of Example~\ref{sir_ex} is alphabetic and has a finite initial set.
However, it generates a non-regular language. We give now a similar example
for the flat and circular cases.

\begin{example} \label{dyck_ex} Let ${\cal S}=(A,I,R)$ be the flat
  splicing system defined by $A = \{a,\bar{a} \}$, $I= \{a\bar{a}\}$
  and $R = \{ \fagnotrul{1}{1}{1}
  {1} \}$. It generates the Dyck language.  Recall that the
  Dyck language over $\{a, \bar{a}\}$ is the language of parenthesized
  expressions, $a, \bar{a}$ being viewed as a pair of matching
  parentheses (see, for example, \cite{MR549481}).

  The circular splicing system ${\cal S}=(A,I,R)$ defined by $A =
  \{a,\bar{a} \}$, $I= \{\fagnotcir{(a\bar{a})}\}$ and $R = \{ 1\#1\$1\#1 \}$
  generates the language $\hat{D}$ of words having as many $a$ as
  $\bar{a}$.
 Indeed, the system allows to insert $a\bar{a}$ and $\bar{a}a$ anywhere,
 hence we get the set of words having as many $a$ as $\bar{a}$.  This
 language $\hat{D}$ is the circularization of the
  Dyck language \cite{MR549481}.
\end{example}

\begin{remark}
  All examples given so far show that alphabetic splicing systems
  generate always a context-free languages, and this is indeed the
  case as we shall see below. Observe however that we cannot get all
  context-free languages and even all regular languages
 as splicing languages with a finite initial
  set. For example, we can easily see that
  the language $L =\{ a^*c \mid n \geq 0\}$
  cannot be obtained by such a splicing system: consider indeed the
  fact that all words of $L$ have exactly only one  $c$,
 so inserting a word of $L$
into another  word of $L$ will produce a word out of $L$.
\end{remark}

The main result of this section is the following.

\begin{theorem} \label{context-fre} 
 The language generated by a flat or circular alphabetic
  context-free splicing system is context-free.
 \end{theorem}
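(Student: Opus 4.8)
The strategy is to show that the language generated by a flat alphabetic context-free splicing system can be obtained from a context-free language by a sequence of operations that preserve context-freeness, and then to transfer the result to the circular case via the equivalence (recalled in Section~\ref{cwdef}) that a circular language $C$ is context-free iff $Lin(C)$ is context-free. For the flat case, the key idea is that with alphabetic rules, an insertion $\fagnotproduc{r}{u}{v}{w}$ with $r=\fagnotrul{\alpha}{\beta}{\gamma}{\delta}$, $u = x\alpha\cdot\beta y$, $v = \gamma z\delta$, $w = x\alpha\cdot\gamma z\delta\cdot\beta y$ only inspects at most one letter on each side of the cut in $u$ and at most one letter at each end of $v$. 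So the ``control'' exerted by the rules is finite-state. The plan is to build a context-free grammar for $\fagnotlang{{\cal S}}$ directly, using nonterminals that encode (i) which of the finitely many ``border conditions'' (first letter, last letter, or emptiness/singleton status) a derived factor satisfies, and (ii) the letters $\alpha,\beta$ to the immediate left/right of an insertion point. A derived word is generated by a nonterminal $S_{p,q}$ meaning ``a word in $\fagnotlang{{\cal S}}$ (or an insertable factor thereof) whose left border is $p$ and right border is $q$''; the productions of the grammar are: the seed productions giving the words of $I$ (a finite language), and, for each alphabetic rule $r=\fagnotrul{\alpha}{\beta}{\gamma}{\delta}$, a production that splits a word matching $\cdots\alpha\,|\,\beta\cdots$ and inserts between the two halves a word matching $\gamma\cdots\delta$. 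The crucial point is that ``matching $\cdots\alpha\,|\,\beta\cdots$'' and ``matching $\gamma\cdots\delta$'' are conditions on the border nonterminals only, so these are legitimate context-free productions over the finite nonterminal set $\{S_{p,q}\}$.

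The main technical work, and the step I expect to be the principal obstacle, is setting up the border-tracking bookkeeping correctly and proving that the grammar is sound and complete. Soundness (every word generated by the grammar lies in $\fagnotlang{{\cal S}}$) is a straightforward induction on derivation length, checking that each grammar production corresponds to an application of some rule of $R$ or to a word of $I$. Completeness (every word of $\fagnotlang{{\cal S}}$ is generated) is the delicate direction: one argues by induction on the number of splicing steps used to produce a word $w$, and the subtlety is that when $w$ is obtained by inserting $v$ into $u$, one needs $u$ and $v$ themselves to be derivable by the grammar \emph{with the correct border labels}, and one needs the border labels of the result $w$ to be computable from those of $u$ (since the insertion happens in the interior of $u$, the outer borders of $w$ coincide with those of $u$ — this is exactly where the asymmetry ``$u$ is the one into which $v$ is inserted'' helps). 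Care is needed with the degenerate cases flagged in Example~\ref{exemple_simples} and Remark~\ref{concat_insert}: handles equal to $1$, words of length one (whose ``first letter'' and ``last letter'' coincide), and insertions that are really concatenations; these force the border alphabet to include a few extra symbols (e.g. a symbol for ``the empty word'' and one for ``a single letter $a$''), but the set stays finite.

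Once the flat case is established, the circular case follows quickly. Given a circular alphabetic context-free splicing system ${\cal S}=(A,I,R)$, one passes to $Lin(I)$, which is context-free by the classical fact recalled in Section~\ref{cwdef}, and observes that the circular splicing operation on circular words, when read through linearizations, amounts to: cut each of two circular representatives open at a site, concatenate, and then re-close. As noted in the introduction to Section~\ref{Sflat}, most results for flat splicing have a circular counterpart precisely because circular splicing factors as (flat-style insertion) followed by circularization. So $Lin(L({\cal S}))$ can be described by essentially the same border-labelled context-free grammar, now with productions that also allow ``rotating'' a word (conjugation), which over a fixed finite nonterminal set is again a context-free operation; alternatively, one invokes closure of context-free languages under conjugacy/cyclic closure directly. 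Hence $Lin(L({\cal S}))$ is context-free, and therefore $L({\cal S})$ is a context-free circular language. I would present the flat construction in full and then state the circular case as a corollary of the flat one together with the linearization equivalence, which keeps the argument from duplicating the bookkeeping.
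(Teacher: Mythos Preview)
Your border-tracking idea is the right intuition and matches the paper's use of nonterminals $\fagnotwrd{p}{q}$ indexed by first/last letters. But the step you call ``a production that splits a word matching $\cdots\alpha\mid\beta\cdots$'' is not a context-free production as written. A rule $S_{p,q}\to S_{p,\alpha}\,S_{\gamma,\delta}\,S_{\beta,q}$ would concatenate three \emph{independent} words of $\fagnotlang{{\cal S}}$, whereas splicing inserts $v$ into a \emph{single} word $u=x\alpha\beta y$; the two halves $x\alpha$ and $\beta y$ are generally not themselves in $\fagnotlang{{\cal S}}$, so they cannot be derived from your $S$-nonterminals. Your parenthetical ``(or an insertable factor thereof)'' hints that you sense this, but you do not say what those factors are or how the grammar generates them. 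Relatedly, your completeness argument relies on ``the outer borders of $w$ coincide with those of $u$'', which is false precisely for the concatenation-type productions (handles equal to $1$) that you defer to a later case analysis.

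The paper closes this gap with two devices you are missing. First, it proves a reordering lemma: all concatenation-type productions can be performed before any proper insertion, so one first forms the intermediate language $K$ (closure of $I$ under the concatenation-type rules) and only then performs insertions. Second, for the insertion phase it introduces a separate family of nonterminals $\fagnotins{a}{b}$ placed \emph{between every pair of consecutive letters} of words of $K$; the production $\fagnotins{\alpha}{\beta}\to \fagnotins{\alpha}{\gamma}\,\fagnotwrd{\gamma}{\delta}\,\fagnotins{\delta}{\beta}$ then simulates the rule $\fagnotrul{\alpha}{\beta}{\gamma}{\delta}$ without ever ``splitting'' a derived word. The resulting grammar is a generalized context-free grammar (right-hand sides may be context-free languages over terminals and nonterminals), which is known to yield a context-free language. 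Note also that the theorem concerns a context-free initial set $I$, not a finite one as you wrote; this is exactly why the paper needs the generalized-grammar formalism rather than listing $I$ by finitely many seed productions.

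For the circular case your plan to reduce to the flat case via $Lin$ is correct in outline, but the reduction is not quite ``the same grammar plus rotation''. The paper notes that one again generates $Lin(L({\cal S}))$ by the two-phase scheme, the only change being that determining which rules act as concatenations versus proper insertions is less direct (because cutting a circular word at a site turns a single circular splicing into a linear concatenation pattern). Invoking closure of context-free languages under cyclic closure is legitimate for the final packaging, but it does not by itself give you the grammar for the linearization of the splicing language.
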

This theorem is effective, that is, we can actually construct
context-free grammars which generate the language produced by the
splicing system. The whole proof is in \cite{flat}, but  a hint of the proof is given below, and an example is then given  to illustrate it.

\begin{sketch}

The flat case is simpler, for this reason, we begin with it.

Let ${\cal S}$ be a flat splicing system $(A, I, R)$ which generates the language $L = \fagnotlang{{\cal S}}$.

We have seen (Example~\ref{exemple_simples}) that some productions are
in fact concatenations in contrast with ``proper insertions''
 where we really put a word between two non-empty parts of another one.
 As these concatenations cannot be treated like ``proper insertions'',
we are going to separate both types of productions.
And we prove that, roughly, we can do all
concatenations before any insertions.

Then, we first construct a grammar $G_1$ that generates the language $K$  obtained
from the initial language
$I$ by performing all the concatenations iteratively.
Then, second, we construct a grammar $G_2$ that generates the language $L$ obtained
from the language $K$
by performing all the insertions iteratively.

In the circular case, we generate the full linearization of the language
$L = \fagnotlang{{\cal S}}$.
The only difference is that the calculation of the set of concatenations and the set of insertions we need
to consider, is less straightforward.
\end{sketch}
\begin{remark}
The proof makes use of the notion of generalized  context-free grammars, that is context-free grammar
which may have rules whose right parts are context-free languages upon the set of terminal
and non terminal symbols. For example, the rule  $A \to B^*c^* \mid L$, with $a$
and $B$ non-terminal variables, $c$ a terminal variable, and $L = \{a^nb^n\mid n \geq 1\}$, is
a correct rule for a  generalized  context-free grammar.
These grammars are known to produce context-free languages (see \cite{Kral70}).
 \end{remark}

\begin{example}
 Consider the flat splicing system  ${\cal S} = (A, I, R)$ over the
  alphabet $A = \{ a, b\}$, with $I = \{aa, ab\}$, and with $R$ composed of
  the splicing rules $\{  \fagnotrul{1}{a}{a}{a} ,  \fagnotrul{a}{b}{a}{b} \}$.

The grammar which generates  the language $K$  obtained
from the initial language
$I$ by performing all the concatenations iteratively, is
\begin{align}
 S  &\to \fagnotwrd aa \ \mid \ \fagnotwrd ab \label{fagnoteq_1}\\
 \fagnotwrd aa & \to aa \ \mid \ \fagnotwrd aa \ \fagnotwrd aa \label{fagnoteq_2}\\
 \fagnotwrd ab &\to  ab  \ \mid \ \fagnotwrd aa \ \fagnotwrd ab \label{fagnoteq_3}
\end{align}

In this grammar,  the symbol $\fagnotwrd{a}{b}$ is used to
  derive words of length at least~$2$ that start with the letter $a$
  and end with the letter $b$, that is the set $K \cap aA^*b$.
(Note that the one-lettered words would need other variables.)

In line \ref{fagnoteq_2}, the first rule is used to derive the one word of
the initial set that
begins and ends with the letter $a$; the second rule is used to permit
the concatenation of
two words that begin and end with the letter $a$, producing a word
that begins and ends also with the letter $a$.
The rules of line  \ref{fagnoteq_3} are similarly constructed.

One can easily check that this grammar produces the language $(a^2)^+ + (a^2)^*ab$, and
that this language is the same as the one produced by the concatenations.

The grammar which generates  the final language $L   = \fagnotlang{{\cal S}}$ obtained
from the initial language
$K$ by performing all the pure insertions iteratively, is
\begin{align}
 S  &\to \fagnotwrd aa \ \mid \ \fagnotwrd ab \label{fagnoteq_4}\\
 \fagnotwrd aa & \to ((a \fagnotins aa) ^2)^* a \fagnotins aa a\label{fagnoteq_5}\\
 \fagnotwrd ab &\to  ((a \fagnotins aa) ^2)^*a\fagnotins ab b \label{fagnoteq_6}\\
 \fagnotins aa & \to  \fagnotins aa \ \fagnotwrd aa \ \fagnotins aa \ \mid \ 1  \label{fagnoteq_7}\\
  \fagnotins ab &\to  \fagnotins aa \ \fagnotwrd ab \ \fagnotins bb \ \mid \ 1 \label{fagnoteq_8}\\
\fagnotins bb &\to   1 \label{fagnoteq_9}
\end{align}

In this grammar,  the symbol $\fagnotwrd{a}{b}$ has the same use as in the preceding one.
The symbol $\fagnotins{a}{b}$  is always preceded by a letter $a$ or
 a non-terminal which
eventually derives a word ending with a letter $a$ and, similarly, the same
symbol $\fagnotwrd{a}{b}$ is always followed
by a letter $b$  or a non-terminal which
eventually derives a word beginning with a letter $b$.
Note that in such a grammar, you can always derive words with variables
of type $\fagnotwrd{x}{y}$ between each other type of letters.
This is intended to simulate the
insertions.

In line \ref{fagnoteq_8}, the first rule is used to permit the insertion  of a
word that begins  with the letter $a$ and ends with the letter $b$
in another word
between a letter $a$ and a letter $b$. Note that the  $\fagnotins aa$
of the same rule will possibly be used later to make new insertions,
while  $\fagnotins bb$ does not produce anything else than $1$ because
no rule permits an insertion between two letters $b$.

One can easily check that this latter grammar produces the language
$(a^2)^+ \cup \{a^{2n_1+1}a^{2n_2+1}\dots a^{2n_p+1}b^p\mid n_i \geq 0 \}$  which is equal to
 $(a^2)^+ \cup\{a^{p+2n}b^p\mid p\geq 1, n \geq 0 \}$. It can also be checked
that this language is the same as the one produced by the splicing system.
\end{example}

\section{Circular Simple and Semi-simple Splicing Systems} \label{cssh}

A special class of alphabetic splicing systems, namely
the \textit{P\u{a}un circular semi-simple splicing systems}
(or \textit{CSSH systems}) has been previously considered in \cite{ceterchi,ceterchi2004,ssd},
once again as
the circular counterpart of linear semi-simple splicing
systems introduced in \cite{ssss}.
$S = (A, I, R)$ is a CSSH system when
both $u_1u_2$,
$u_3u_4$ have length one.
Thus, either $u_1$ (resp. $u_3$) or $u_2$ (resp. $u_4$) is $1$, the empty word.

In a CSSH system, a rule is defined by a
pair of letters and by the positions of these letters in the
rule. As in the linear case, there are four types of rules, namely
$a_i \# 1 \$ a_j \# 1$, $a_i \# 1 \$ 1 \# a_j$,
$1 \# a_i \$ a_j \# 1$ and $1 \# a_i \$ 1 \# a_j$,
with $a_i, a_j \in A$.
Furthermore, since $R$ is symmetric,
if $a_i \# 1 \$ 1 \# a_j \in R$ then we also have
$1 \# a_j \$ a_i \# 1 \in R$.
The positions of the letters in the rule play an important
role that cannot be ignored.
Thus, using the terminology of \cite{ceterchi2004}, an
{\it $(i,j)$-CSSH system}, with  $(i, j) \in \{(1, 3), (2, 4) \}$,
(resp. a {\it $(2,3)$-CSSH system}) is a CSSH system
where for each $u_1 \# u_2 \$ u_3 \# u_4 \in R$
we have $u_i$, $u_j \in A$ (resp. $u_2$, $u_3 \in A$
or $u_1$, $u_4 \in A$). For instance,
$S=(A,I,R)$, with $A = \{a,b,c \}$, $I = \1 \{aac, b \}$,
$R =\{c \# 1 \$ b \# 1 \}$ is a
$(1,3)$-CSSH system, whereas $S' = (A,I,R')$,
with $R' =\{1 \# c \$ 1 \# b \}$ is a $(2,4)$-CSSH
system \cite{ceterchi2004}. Notice that in a $(1,3)$-CSSH system,
circular splicing can be rephrased as
follows: given a rule $a_i \# 1 \$ a_j \# 1$
and two circular words $\1 xa_i$, $\1 ya_j$,
the circular splicing yields as a result $\1 xa_i ya_j$.

The special case $u_1u_2 = u_3u_4 \in A$ ({\em simple
systems}) was considered in \cite{ceterchi2003},
once again as the circular counterpart of the linear
case investigated in \cite{mat}.
Given $(i, j) \in \{(1, 3), (2, 4), (2,3) \}$,
an {\it $(i,j)$-circular simple system}
is an $(i,j)$-CSSH system which is simple \cite{ceterchi2003}.
As stated in the previous section, all these systems generate context-free languages
and there are circular simple systems
generating non-regular circular languages.
Indeed, let $S = (A,I,R)$ be the $(1,3)$-circular simple system
defined by $A = \{a, b, c\}$, $I= \1 \{baca\}$ and
$R = \{a \# 1 \$ a \# 1 \}$.
Then, $Lin(L(S)) \cap (ba)^*(ca)^* =
\{(ba)^n(ca)^n ~|~ n \geq 1\}$
and consequently $L(S)$ is not a regular circular language
\cite{nat}. A characterization of simple systems having only one rule
and generating a regular circular language has been given in \cite{completi}
(see Section \ref{Scomplete}).

In \cite{ceterchi2003},
the authors compared the classes of circular languages
generated by $(i,j)$-circular simple systems, for
different values of the pair $(i,j)$.
A precise description of the relationship among
these classes of languages along with some of
their closure properties was given.
In particular, in \cite{ceterchi2003}, the authors
proved that $(1,3)$- and
$(2,4)$-circular simple systems generate the same
class of languages.
An analogous viewpoint was adopted for P\u{a}un circular
semi-simple splicing systems in \cite{ceterchi2004} where the authors
highlighted further differences between
circular simple and CSSH systems. In particular,
the class of languages generated
by $(1,3)$-CSSH systems is not comparable with the class of
languages generated by $(2,4)$-CSSH systems.
Indeed, in \cite{ceterchi2003}, the authors proved that
there is no $(2,4)$-CSSH system $S_1$ such that
$L(S) = L(S_1)$, where $S = (A,I,R)$ is
the $(1,3)$-CSSH system defined by $A = \{a, b, c \}$, $I =
\1 \{aac, b \}$ and $R = \{c \# 1 \$ b \# 1 \}$
(see also \cite{nat} for an alternative proof of this statement).
In \cite{marked}, the authors
show that the map $\mu$ defined by $\mu(C) = C^{Rev}$
is a bijection between the class of circular
languages generated by $(2,4)$-CSSH and the
class of circular languages generated by $(1,3)$-CSSH
systems.
However, a still open problem is to find
a characterization of the class of regular circular
languages generated by CSSH systems.
The structure of these languages is unknown
even if we restrict ourselves to languages
generated by circular simple splicing systems.
This problem has been solved for special classes of CSSH systems
as we will see in the next part of this paper.
We will also make some assumptions on a CSSH system
$S = (A, I, R)$.
Firstly, it has been proved that adding the empty word
to the initial set $I$ will only add the empty word to
the language generated by $S$ \cite{rairo}.
Thus, we assume $1 \not \in I$.
Secondly we assume that any rule $r = u_1 \# u_2 \$ u_3 \#u_4$ in $R$ is {\rm useful}
(i.e., there exist
$\1 x, \1 y \in I$, such that
$u_1u_2 \in Fact_c(x)$, $u_3u_4 \in Fact_c(y)$)
and $|w|_{SITES(R)} \not = 0$, for any $w \in I$.
Indeed omitting rules or circular
words in $I$ which do not intervene in the application
of the splicing operation will not change
the language generated by a CSSH system, beyond the finite set of
words removed from $I$. This result was incorrectly stated for
P\u{a}un circular splicing systems in \cite{marked} but it is not
difficult to see that it holds
for CSSH systems.

\section{Extended Marked Systems} \label{Smarked}

Problems \ref{P1}--\ref{P3} have been solved in \cite{marked}
for the class of the {\it extended marked systems}.
They are $(1,3)$-CSSH systems $S = (A, I, R)$ such that
each $w \in I$ contains at most
one occurrence of a letter in
$SITES(R)$. In the same paper, it has been proved that in
order to solve these problems,
the assumption on
$I$ can be replaced by the
condition $I = A = SITES(R)$. A $(1,3)$-CSSH system $S = (A, I, R)$
such that $I = A = SITES(R)$ is called a {\it marked system}.
For instance, $S = (A,I,R)$,
with $A = \{a,b,c \}$, $I = \1 \{ aac, b \}$,
$R = \{ c \# 1 \$ b \# 1 \}$, is an extended marked
system whereas $S' = (A', I',R)$, with
$A' = I' = \{c, b \}$, $R = \{c \# 1 \$ b \# 1 \}$,
is a marked system.
Given an extended marked system $S = (A, I, R)$,
there is a marked system $S' = (A', I',R)$ associated with $S$.
It is obtained by erasing any letter $a$ in $A$ or in $w \in I$
such that $a \not \in SITES(R)$.
Then $L(S)$ can be constructed
with ease by means of $L(S')$. Let us illustrate this construction over
an example. Consider the above-mentioned extended marked system $S = (A,I,R)$,
with $A = \{a,b,c \}$, $I = \1 \{ aac, b \}$,
$R = \{ c \# 1 \$ b \# 1 \}$ and the associated marked system
$S' = (A', I',R)$, with $A' = I' = \{c, b \}$,
$R = \{c \# 1 \$ b \# 1 \}$. Therefore, $L(S)$ is the circularization
of the language obtained by inserting the word $aa$ on the left
of each occurrence of $c$ in the words of $Lin(L(S'))$.
In turn,
we can easily obtain solutions
to the above-mentioned problems for
marked systems when we have these solutions for
{\it transitive marked systems}, whose definition is recalled below.
To shorten notation, in the next part of this section
$S=(I,R)$ will denote a marked system and
$L(I,R)$ will be the corresponding generated language.
Furthermore, $(a_i,a_j)$ will be an
abridged notation for a rule $r = a_i \# 1 \$ a_j \# 1$
in $R$.
Then, since the set $R$ of rules specifies a symmetric
binary relation, in a natural way a marked system $S=(I,R)$
is represented by an undirected graph $G=(I, R)$,
where $I$ is the vertex set and $R$ is the edge set.
In an undirected
graph, {\it self-loops} - edges from a vertex to
itself - are forbidden but here we do not
make this assumption.
From now on, $G$ will be referred to as the graph
{\it associated} with the marked system $S=(I,R)$.
A marked system $S = (I,R)$ is {\it transitive}
if $G$ is connected and either $|I| > 1$
or $G$ is a single vertex with a self-loop.
A subset $J$ of $I$ is transitive
if the subgraph $G^J$ {\it induced} by $J$ inherits
the above-mentioned property of $G$ (i.e.,
$G^J$ is connected and either $|J| > 1$
or $G^J$ is a single vertex with a self-loop).
In this case, $S^{J}=(J,R^{J})$, with $R^{J}=R\cap (J\times J)$,
denotes the subsystem of $S$ associated with $J$.
We say that $S^{J}$ is a {\it maximal} transitive subsystem
of $S$ if $J$ is a maximal object in the class of the transitive subsets of $I$
for the order of set inclusion.
The decomposition
$\{G_h = (I_h,R^{I_h}) ~|~ 1 \leq h \leq g\}$
of $G$ into its connected components
defines the canonical decomposition
$\{ (I_h, R^{I_h}) ~|~ 1 \leq h \leq g \}$
of $S$ in disjoint maximal transitive marked subsystems
$S^{I_{h}}$. Moreover the following result holds \cite{marked}.

\begin{proposition} \label{CD}
The language $L(I,R)$ generated by the marked system $S = (I,R)$ is the disjoint union of the languages $L(I_{h},R^{I_{h}})$ generated by the maximal transitive subsystems of its canonical decomposition $\{ (I_h, R^{I_h}) ~|~ 1 \leq h \leq g \}$.
\end{proposition}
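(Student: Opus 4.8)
The plan is to exploit the very rigid structure of marked systems: in a marked system $S=(I,R)$ we have $I=A=SITES(R)$, so every circular word in the generated language is built by starting with single letters and repeatedly inserting (on the circle) one marked letter to the left of an occurrence of another marked letter, exactly as described by the rephrasing of $(1,3)$-CSSH splicing in Section~\ref{cssh}. First I would record the two inclusions separately. The inclusion $\bigcup_h L(I_h,R^{I_h}) \subseteq L(I,R)$ is immediate: each $S^{I_h}$ is a subsystem of $S$ (same rule relation restricted to $I_h\times I_h$), so any derivation valid in $S^{I_h}$ is a fortiori a derivation in $S$, whence $L(I_h,R^{I_h})\subseteq L(S)$ for every $h$.

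The harder inclusion is $L(I,R)\subseteq \bigcup_h L(I_h,R^{I_h})$, together with disjointness. The key observation is an invariant: if $\1 w\in L(S)$ and $\1 w$ is not a single letter, then $alph(\1 w)$ is entirely contained in one connected component $I_h$ of $G$. I would prove this by induction on the length of a derivation of $\1 w$. The base case is $I=A$, where each word is a single letter $a_i$, living in the component containing $a_i$. For the inductive step, suppose $(\1 w',\1 w'')\vdash_r \1 w$ with $r=(a_i,a_j)$, i.e.\ $a_i\in alph(\1 w')$, $a_j\in alph(\1 w'')$ and $\1 w=\1 w'_{(a_i)}\!\cdot a_i\!\cdot\! \1 w''$ (splicing the second circular word in to the left of an occurrence of $a_i$). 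Since $r=(a_i,a_j)\in R$ is an edge of $G$, $a_i$ and $a_j$ lie in the same component, say $I_h$. By the induction hypothesis $alph(\1 w')\subseteq I_h$ (as $a_i\in alph(\1 w')$) and $alph(\1 w'')\subseteq I_h$ (as $a_j\in alph(\1 w'')$); hence $alph(\1 w)=alph(\1 w')\cup alph(\1 w'')\subseteq I_h$. This establishes the invariant. It also shows the union is disjoint: the components $I_h$ are pairwise disjoint as vertex sets, so a word of length $\geq 2$ cannot belong to two different $L(I_h,R^{I_h})$, and each single letter $a_i$ lies in exactly one $I_h$ (its own component), and is generated by $S^{I_h}$ since $a_i\in I_h$.

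It remains to check that a derivation of $\1 w$ in $S$ with $alph(\1 w)\subseteq I_h$ can be replayed entirely inside $S^{I_h}$; this is where one must be slightly careful, since the derivation might in principle pass through intermediate words. But by the invariant each intermediate circular word $\1 v$ occurring in the derivation of $\1 w$ also satisfies $alph(\1 v)\subseteq I_{h'}$ for some component $I_{h'}$, and whenever such a $\1 v$ is actually spliced into $\1 w$ (or into an ancestor of $\1 w$), the rule used is an edge of $G$ connecting a letter of $\1 v$ to a letter already present, forcing $h'=h$. Thus every circular word appearing in the derivation tree of $\1 w$ has alphabet inside $I_h$, and every rule used is an edge of $G^{I_h}=R^{I_h}$; reading the derivation tree off in $S^{I_h}$ shows $\1 w\in L(I_h,R^{I_h})$. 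Combining the two inclusions with disjointness gives the claimed decomposition $L(I,R)=\bigsqcup_{h=1}^{g} L(I_h,R^{I_h})$. The main obstacle is precisely the bookkeeping in this last step — making sure that a globally valid derivation genuinely restricts to one component and does not secretly rely on a letter outside $I_h$ appearing transiently; the alphabet invariant, pushed through the whole derivation rather than just its final step, is what closes that gap.
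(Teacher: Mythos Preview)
Your argument is correct. The paper itself does not supply a proof of Proposition~\ref{CD}; the result is simply quoted from \cite{marked}. The approach you take---establishing by induction on derivation length the invariant that $alph(\1 w)$ lies in a single connected component $I_h$ of $G$, and then observing that the whole derivation tree can be replayed inside $S^{I_h}$ because every rule used is an edge of $G^{I_h}$---is the natural one and would serve perfectly well as a self-contained proof here.
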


\begin{example}
Let $I=\{a_{1},a_{2},a_{3}\}$ and $R=\{(a_{1},a_{1}),(a_{2},a_{2}),(a_{2},a_{3})\}$.
Then the canonical decomposition of $S = (I,R)$
is given by the transitive marked systems $S^{1}=(I_{1},R^{I_{1}})$,
with $I_{1}=\{a_{1}\}$ and $R^{I_{1}}=\{(a_{1},a_{1})\}$, and
$S^{2}=(I_{2},R^{I_{2}})$, with $I_{2}=\{a_{2},a_{3}\}$ and $R^{I_{2}}=\{(a_{2},a_{2}),(a_{2},a_{3})\}$.
\end{example}

Proposition \ref{CD} shows that,
in order to characterize the language generated by a marked system, it is sufficient to characterize
the languages generated by the transitive marked systems of its canonical decomposition.
Therefore, in the next part of this section $S = (I,R)$ will be a transitive marked system.
Some necessary definitions are given below.

\begin{definition}
Let $S=(I,R)$ be a transitive marked system.
The \emph{distance} $d_I(a_i,a_j)$
between $a_i, a_j \in I$ is defined by
$d_I(a_i,a_j) = \min \{k ~|~ \exists b_1, \ldots , b_k
\in I ~:~ (b_h,b_{h+1}) \in R, ~ 1 \leq h \leq k -1, ~ b_1=a_i,
~b_k=a_j \}$.
The \emph{global diameter} of $S$ is defined as $\delta(S)=\max\{d_{I}(a,b)\mid a,b\in I\}$. The \emph{local diameter} of $S$ is defined as $\delta_{\ell}(S)=\max \{\delta(S^{J}) \mid \mbox{$J$ is a transitive subset of $I$}\}$.
\end{definition}

\begin{example}
 Let $S=(I,R)$ with
\begin{eqnarray*}
I &=& \{a_{1},a_{2},a_{3},a_{4},a_{5}\}, \\
R &= & \{(a_{1},a_{2}),(a_{2},a_{3}),(a_{3},a_{4}),
(a_{4},a_{5}),(a_{1},a_{5}),(a_{2},a_{5}),(a_{3},a_{5}),(a_{4},a_{5})\}.
\end{eqnarray*}
 The maximum distance between two letters in $I$ is $3$, hence the global diameter of $S$
 is $\delta(S)=3$, while the local diameter of $S$ is $\delta_{\ell}(S)=4$,
 since $\delta(S^{J})=4$ for $J=\{a_{1},a_{2},a_{3},a_{4}\}$.
\end{example}

\begin{remark}\label{rem:J}
For every transitive marked system $S$ it holds  $\delta_{\ell}(S)\geq \delta(S)$. Moreover, if $\delta_{\ell}(S)\geq 4$, then there exists a transitive subset $J=\{a_{i},a_{j},a_{k},a_{l}\}$ of $I$ such that $\delta(S^{J})=4$.
\end{remark}

\begin{theorem} \label{CR}
$L(I,R)$ is a regular circular language if and only if $\delta_{\ell}(S) \leq 3$.
\end{theorem}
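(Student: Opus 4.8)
The plan is first to convert the hypothesis $\delta_{\ell}(S)\le 3$ into a graph condition, and then to attack the two implications separately. Using Remark~\ref{rem:J} together with the elementary facts that $P_{4}$ (the path on four vertices) is, up to isomorphism, the only four-vertex graph realising the maximal value of $\delta$, and that any induced subgraph $H$ with $\delta(H)\ge 4$ contains an induced $P_{4}$ (take a shortest path joining two vertices at maximal $\delta$-distance and keep its first four vertices), one gets: for a transitive marked system $S=(I,R)$, $\delta_{\ell}(S)\le 3$ if and only if the associated graph $G$ has no induced $P_{4}$, i.e. $G$ is a connected cograph. It is also convenient to record a localisation lemma: for every $B\subseteq I$, $L(S)\cap \1 B^{*}=L(S^{B})$. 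This holds because whenever $(\1 w',\1 w''){\vdash}_r\1 w$ the circular words $\1 w'$ and $\1 w''$ are arcs of $\1 w$, so along any derivation the alphabet never grows and only rules with both sites in the alphabet of the produced word are used; hence a derivation of a word over $B$ stays inside the subsystem $S^{B}$. Throughout I would use the resulting recursive description of $L(S)$: a circular word $\1 w$ is generated iff $|w|=1$, or $w$ has a linearisation $z_{1}z_{2}$ with $z_{1},z_{2}$ nonempty, both $\1 z_{1}$ and $\1 z_{2}$ generated, and the last letters of $z_{1}$ and $z_{2}$ $R$-adjacent (the symmetry of $R$ making this condition symmetric).

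For the ``if'' direction, suppose $G$ is a connected cograph, with join decomposition $G=G_{1}*\cdots*G_{m}$, $m\ge 2$. I would prove by induction on the modular decomposition tree of $G$ an explicit description of $Lin(L(S))$ and verify that it is regular, so that $L(S)\in Reg^{\sim}$. The base cases (a single vertex, with or without a self-loop, and a clique) are handled directly; in the inductive step the key point is that any two letters in distinct parts $G_{i},G_{j}$ are $R$-adjacent, so that in the recursive bracketing criterion above the adjacency constraint becomes a local, finite-memory condition, from which a finite automaton (equivalently a regular expression) for $Lin(L(S))$ can be assembled. This part is essentially bookkeeping.

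For the ``only if'' direction, suppose $\delta_{\ell}(S)\ge 4$. By Remark~\ref{rem:J} there is a transitive subset $J=\{a,b,c,d\}$ with $\delta(S^{J})=4$, so the graph induced on $J$ is the path $a-b-c-d$ (possibly carrying some self-loops, but with no further edge). By the localisation lemma $L(S)\cap\1 J^{*}=L(S^{J})$, hence it suffices to prove that $L(S^{J})$ is not a regular circular language; indeed then $Lin(L(S))$ cannot be regular, since intersecting it with the regular language $Lin(\1 J^{*})$ would yield the regular language $Lin(L(S^{J}))$. To show $L(S^{J})$ non-regular I would exhibit a regular ``probe'' language $K\subseteq\{a,b,c,d\}^{*}$ such that $K_{0}:=Lin(L(S^{J}))\cap K$ is a context-free, non-regular language, capturing the nested, matched behaviour forced by the two interior edges $ab$ and $cd$ of the path (and, crucially, by the fact that the path endpoints $a$ and $d$ are never $R$-adjacent, the only feature of $J$ that is really used and one that is unaffected by self-loops or by letters outside $J$). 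This splits into producing an explicit family of derivations realising every word of $K_{0}$, and an invariant argument proving that no other word of shape $K$ is generated; closure of $Reg$ under intersection then finishes the proof.

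The main obstacle is this last step: choosing $K$ and the non-regular slice $K_{0}$, and in particular proving the inclusion $Lin(L(S^{J}))\cap K\subseteq K_{0}$ by means of an invariant robust against the possible self-loops on $a,b,c,d$ (which is why one should rely only on the non-adjacency of $a$ and $d$). A secondary, more computational, difficulty is setting up and verifying the explicit regular description of $Lin(L(S))$ needed for the ``if'' direction.
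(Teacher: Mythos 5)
Your overall architecture is the right one---reduce the negative direction to an induced $P_4$ via Remark~\ref{rem:J} plus a localisation lemma, and treat the $P_4$-free (cograph) case separately---and your preliminary reductions (the equivalence of $\delta_{\ell}(S)\le 3$ with $P_4$-freeness, the identity $Lin(L(S^{J}))=Lin(L(S))\cap J^{*}$, and the recursive bracketing criterion for membership) are all correct. But the proposal has a genuine gap exactly where you flag it: the non-regularity of $L(S^{J})$ for the induced path $a-b-c-d$ is the heart of the theorem, and you leave both the choice of the probe $K$ and the invariant proving $Lin(L(S^{J}))\cap K\subseteq K_{0}$ as open obstacles. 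The paper closes this with a lighter device than a context-free non-regular slice: it invokes the bounded-completion property of regular languages (Proposition~\ref{bounded}: there is an $N$ such that every $uv\in L$ admits a completion $uv'\in L$ with $|v'|\le N$) and shows, by induction on $n$, that every word of $L(J,R^{J})$ of the form $\1 (ad)^{n}y_{n}$ has length at least $4n$, the unique minimal completion being $\1 (ad)^{n}(cb)^{n}$. The inductive invariant uses only the non-adjacencies inside the path, which is exactly the robustness against self-loops you were after; but note that a statement of the form ``the probe intersection equals $\{(ad)^{n}(cb)^{n}\}$'' is stronger than what is needed and correspondingly harder to prove, whereas a length lower bound on \emph{all} completions of $(ad)^{n}$ suffices and is what the induction actually delivers. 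Without this (or an equivalent) combinatorial fact, the ``only if'' direction is not proved.

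On the positive direction, your route through the join/modular decomposition of the cograph would presumably work, but it is heavier than necessary, and the claim that the adjacency constraint ``becomes a local, finite-memory condition'' is itself a nontrivial assertion that you would have to establish at each join node. The paper's Lemma~\ref{lem:reg} shows that when $\delta_{\ell}(S)\le 3$ the answer collapses: \emph{every} circular word $\1 w$ with $|w|\ge 2$ whose alphabet is a transitive subset $J$ of $I$ lies in $L(I,R)$, so that $L(I,R)=I\cup\bigcup_{J}\1 (\cap_{a\in J}J^{*}aJ^{*})$, the union being over transitive $J\subseteq I$; this is a finite union of obviously regular pieces, proved by a single induction on $|w|$ using the bracketing criterion you already have. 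I would replace the automaton construction by this stronger structural statement---if your decomposition argument is carried out correctly it must in any case output exactly this language.
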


Theorem \ref{CR} is a direct consequence of Lemmas \ref{lem:chain} and
\ref{lem:reg}, stated below. Proposition \ref{bounded} is well known
and also needed \cite{EHRtcs}.

\begin{proposition} \label{bounded}
Let $L$ be a regular language. There exists an integer $N$ such that if $uv \in L$,
then there is $v' \in A^*$ such that $|v'| \leq N$ and $uv' \in L$.
\end{proposition}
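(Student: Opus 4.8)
The plan is to derive this directly from a finite-automaton argument (equivalently, from the pumping lemma). Fix a deterministic finite automaton $\mathcal{A} = (Q,A,\delta,q_0,F)$ recognizing $L$, and let $n = \Card(Q)$. I claim that $N = n$ works. The underlying observation is simply that a shortest accepting path in $\mathcal{A}$ is a simple path, hence has bounded length.

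First I would, given a factorization $uv \in L$, consider the state $q = \delta(q_0,u)$ reached by $\mathcal{A}$ after reading $u$. Since $uv \in L$, reading $v$ from $q$ leads into $F$; in particular the residual set $L_q = \{ w \in A^* \mid \delta(q,w) \in F \}$ is nonempty, as it contains $v$. Let $v'$ be a word of minimal length in $L_q$. Next I would bound $|v'|$: the run of $\mathcal{A}$ on $v'$ starting from $q$ visits $|v'| + 1$ states, so if $|v'| \geq n$ two of them coincide, and the corresponding loop can be excised to produce a strictly shorter word still in $L_q$, contradicting minimality. Hence $|v'| \leq n - 1 < N$. Finally, $\delta(q_0, uv') = \delta(q, v') \in F$, so $uv' \in L$, which is exactly what is required.

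There is essentially no obstacle here; the only point worth noting is that $N$ is uniform, i.e. it depends only on $L$ and not on the factorization $uv$, which is immediate because $N$ is just the size of a fixed automaton. An alternative phrasing uses the finitely many right residuals $u^{-1}L = \{ w \mid uw \in L\}$: the hypothesis says $u^{-1}L \neq \emptyset$, and a nonempty language recognized by an $n$-state automaton contains a word of length at most $n-1$, so again $N = n$ suffices. This is also an immediate consequence of the pumping lemma applied in the direction of shortening: pick any witness $v$ with $uv \in L$ and repeatedly pump down until the remaining suffix has length below the pumping constant.
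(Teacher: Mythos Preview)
Your argument is correct and entirely standard. The paper does not actually prove this proposition: it is merely stated there as ``well known'' with a reference to \cite{EHRtcs}, so there is nothing to compare against beyond noting that your DFA/pumping-down argument is precisely the expected one.
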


\begin{lemma}\label{lem:chain}
 Let $S=(I,R)$ be a transitive marked system. If $\delta_{\ell}(S)> 3$, then $L(I,R)$ is not a regular language.
\end{lemma}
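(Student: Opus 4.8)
The plan is to show that when the local diameter exceeds $3$, the language $L(I,R)$ fails the necessary regularity condition of Proposition~\ref{bounded}. By Remark~\ref{rem:J}, if $\delta_\ell(S)>3$ then there is a transitive subset $J=\{a_i,a_j,a_k,a_l\}$ of $I$ with $\delta(S^J)=4$; after relabelling we may assume $R^J$ contains the path $(a_i,a_j),(a_j,a_k),(a_k,a_l)$ and that $d_J(a_i,a_l)=4$, so in particular $(a_i,a_l),(a_i,a_k),(a_j,a_l)\notin R$ and none of the four letters carries a self-loop inside $J$. First I would recall the explicit description of splicing in a $(1,3)$-CSSH system: applying $(a_p,a_q)$ to $\1 xa_p$ and $\1 ya_q$ yields $\1 xa_p y a_q$, i.e. one copies a circular word ending in $a_q$ and inserts it just after an occurrence of $a_p$. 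Thus, starting from the one-letter words of $I$ and tracking only the occurrences of the four relevant letters, every circular word in $L(I,R)$ projects (after erasing all letters outside $J$) to a circular word over $\{a_i,a_j,a_k,a_l\}$ that is reachable by this insertion process in the subsystem $S^J$; by Proposition~\ref{CD} it suffices to work inside $S^J$.

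Next I would isolate a regular "test" language $T$ — for instance a set of the form $a_i^+ a_j^+ a_k^+ a_l^+$ read cyclically, or more precisely a cyclic pattern forcing the letters to appear in the order dictated by the path — and analyse $Lin(L(S^J))\cap T$. The key combinatorial claim is that inside this pattern the only way to manufacture a long word is to iterate insertions along the full chain $a_i\to a_j\to a_k\to a_l$, and that each "round trip" is forced to add letters at several places simultaneously: to lengthen the $a_l$-block one must first have lengthened the $a_k$-block, which required lengthening the $a_j$-block, and so on. Concretely I expect the intersection to look like $\{\,a_i^{\,1+?}a_j^{\,p_1}a_k^{\,p_2}a_l^{\,p_3}\mid \text{the } p_t \text{ satisfy a monotonicity/counting constraint}\,\}$, with the decisive property that one cannot have $a_i^{1}a_j^{m}a_k^{*}a_l^{*}$ in the language for large $m$ while some short completion keeps it in the language — i.e. the "bounded completion" property of Proposition~\ref{bounded} fails. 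This yields non-regularity of $Lin(L(I,R))$, hence of the circular language $L(I,R)$ by the correspondence between a circular language and its full linearization recalled in Section~\ref{cwdef}.

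The main obstacle, and the part that needs real care, is the combinatorial analysis of which words of the test pattern are actually derivable: because the four letters may have additional edges among letters of $I\setminus J$, one must argue that projecting to $J$ cannot create shortcuts, and because insertions can be nested and interleaved, one must rule out that some clever order of insertions produces, with only a bounded number of extra letters, a word of the pattern with arbitrarily long blocks in an unbalanced configuration. I would handle this by a careful induction on the number of splicing steps, maintaining an invariant on the multiset of block lengths in the projection to $J$ (e.g. that the block of $a_l$'s can never exceed the block of $a_k$'s by more than a constant, and analogously down the chain, as long as we stay within the pattern $T$), and then invoking the fact that $d_J(a_i,a_l)=4$ — so no rule directly links the ends of the chain — to show this invariant is preserved. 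Once the invariant is in place, the violation of Proposition~\ref{bounded} is immediate: take a word with $a_i$-block of size $1$ and $a_j$-block arbitrarily large; any $v'$ with $uv'\in Lin(L(I,R))$ and $uv'$ respecting the pattern would need $|v'|$ growing with the block sizes, contradicting the bounded-completion property and hence regularity.
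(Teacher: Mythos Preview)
Your overall strategy coincides with the paper's: reduce to the four-letter subsystem $S^{J}$ via Remark~\ref{rem:J}, and then contradict Proposition~\ref{bounded}. Two points, however, need correction.

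First, a minor one: the reduction to $S^{J}$ is not Proposition~\ref{CD} (that proposition concerns the \emph{maximal} transitive subsystems, i.e.\ the connected components, and $J$ need not be one of them). The relevant fact is simply $Lin(L(I,R))\cap J^{*}=Lin(L(J,R^{J}))$, which holds because splicing $\1 xa_{p}$ and $\1 ya_{q}$ produces a word whose alphabet is $alph(x)\cup alph(y)\cup\{a_{p},a_{q}\}$; hence any derivation of a word over $J$ from one-letter initial words stays inside $J$ throughout. Your worry about ``shortcuts through $I\setminus J$'' dissolves in one line, and you need no projection argument.

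The substantive gap is your choice of test prefix. With $J=\{a_{1},a_{2},a_{3},a_{4}\}$, path rules $(a_{1},a_{2}),(a_{2},a_{3}),(a_{3},a_{4})$ and no others, the circular word $\1 a_{1}a_{2}^{\,m}$ lies in $L(S^{J})$ for every $m\ge 1$ (iterate the rule $(a_{1},a_{2})$ to insert copies of $a_{2}$ after the $a_{1}$), and then so does $\1 a_{1}a_{2}^{\,m}a_{3}a_{4}$ (insert $a_{3}$ via $(a_{2},a_{3})$, then $a_{4}$ via $(a_{3},a_{4})$). Thus the prefix $u=a_{1}a_{2}^{\,m}$ admits the completion $v'=a_{3}a_{4}$ of length~$2$ inside $Lin(L(S^{J}))\cap a_{1}^{+}a_{2}^{+}a_{3}^{+}a_{4}^{+}$, and Proposition~\ref{bounded} is \emph{not} violated. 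More generally, any block pattern that places adjacent letters of the path next to each other can be pumped freely, so your proposed invariant (``the $a_{l}$-block cannot exceed the $a_{k}$-block by more than a constant'', etc.) does not hold and does not yield a contradiction.

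The idea you are missing, and which the paper uses, is to build the prefix from the two \emph{endpoints} of the path only. One takes $u=(a_{1}a_{4})^{n}$. Since none of $(a_{1},a_{4})$, $(a_{1},a_{1})$, $(a_{4},a_{4})$ is a rule in $R^{J}$, no splicing step can create an $a_{1}a_{4}$ adjacency directly; one shows by induction on the derivation that any circular word of $L(S^{J})$ having $(a_{1}a_{4})^{n}$ as a circular factor must have length at least $4n$, the minimum being attained by $\1(a_{1}a_{4})^{n}(a_{3}a_{2})^{n}$. Consequently every $v'$ with $(a_{1}a_{4})^{n}v'\in Lin(L(S^{J}))$ satisfies $|v'|\ge 2n$, which for $n$ large contradicts Proposition~\ref{bounded}. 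In short, the prefix must be built from letters that are \emph{not} joined by a rule; your block pattern over the whole path does not have this property.
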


\begin{sketch}
By Remark \ref{rem:J}, we can suppose that there exists a transitive subset $J$ of $I$ such that $\delta(S^{J})=4$. More precisely, we can suppose that, up to renaming letters, $I$ contains the subset $J=\{a_{1},a_{2},a_{3},a_{4}\}$ and that $\{(a_{1},a_{2}),(a_{2},a_{3}),(a_{3},a_{4})\}\subseteq R$, while neither of $(a_{1},a_{3}),(a_{1},a_{4}),(a_{2},a_{4})$ is in $R$, so that $d_{J}(a_{1},a_{4})=4$.
On the contrary, assume that $L(I,R)$ is regular. Thus, $Lin(L(J,R^J)) =
Lin(L(I,R))\cap J^{*}$ is a regular language.

For every $n>0$, consider the word $\1 w_{n}=\1 (a_{1}a_{4})^{n}$, so that $|\1 w_{n}|=2n$. It can be shown, by induction on $n$, that any word $z_n = \1 w_ny_n$ in $L(J,R^J)$
has length greater than or equal to $4n$ (actually, there is
a unique word of minimal length satisfying these conditions and it is the word $\1 (a_{1}a_{4})^{n}(a_{3}a_{2})^{n}$). This is in contradiction with Proposition \ref{bounded}.
\end{sketch}

\begin{lemma} \label{lem:reg}
Let $S=(I,R)$ be a transitive marked system with $\delta_{\ell}(S) \leq 3$ and let $w \in I^+$.
If $alph(w)$ is transitive and $|w| \geq 2$,
then $\1 w \in L(I,R)$. Consequently we have:

\begin{eqnarray*} \label{struttura}
L(I,R) & = & I ~ \cup \bigcup_{J \subseteq I, \; J
\mbox{ \scriptsize{transitive}}} \1 (\cap_{a \in J}
J^*aJ^*).
\end{eqnarray*}
\end{lemma}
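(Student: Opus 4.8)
Lemma \ref{lem:reg} asserts that in a transitive marked system with local diameter at most 3, every circular word $\1 w$ with $w \in I^+$, $|w| \geq 2$, and $alph(w)$ transitive, actually lies in $L(I,R)$; the displayed formula for $L(I,R)$ is then an immediate consequence (together with the base observation that single letters of $I$ are in $L(I,R)$ and that every spliced word has $alph$ contained in some transitive subset). So the plan is to prove the membership claim and then read off the formula.

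The plan is to proceed by induction on $|w|$. The base case $|w| = 2$: write $w = a_i a_j$ with $a_i, a_j \in I$. Since $alph(w) \subseteq \{a_i,a_j\}$ is transitive, either $a_i = a_j$ and $(a_i,a_i) \in R$ (self-loop), or $(a_i,a_j) \in R$; in either case the rule $a_i \# 1 \$ a_j \# 1$ applied to the circular words $\1 a_i, \1 a_j \in I$ (recall $I = A = SITES(R)$, so each letter is an initial circular word) yields $\1 a_i a_j = \1 w$. For the inductive step, take $w = a_{i_1} \cdots a_{i_m}$ with $m \geq 3$ and $J := alph(w)$ transitive. The idea is to split off one letter: I want to find an index $t$ and a rule linking $a_{i_t}$ to $a_{i_{t+1}}$ (indices cyclic) such that, after removing $a_{i_t}$, the remaining circular word $\1 w'$ still has transitive alphabet and length $\geq 2$; then by induction $\1 w' \in L(I,R)$, and splicing in the single letter $\1 a_{i_t}$ via that rule reconstitutes $\1 w$. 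The subtle point is ensuring the remaining alphabet stays transitive — this is exactly where $\delta_\ell(S) \leq 3$ is used: if $J$ has diameter $\leq 3$, one can order/choose a letter whose deletion does not disconnect the induced subgraph, or whose multiplicity in $w$ is $\geq 2$ so that deleting one occurrence leaves $alph$ unchanged. Concretely, if some letter of $J$ occurs at least twice in $w$, delete one such occurrence (alphabet unchanged, still transitive, length $\geq 2$ since $m \geq 3$), apply induction, and splice it back using the rule to a cyclically adjacent letter (such a rule exists because $G^J$ is connected, so some occurrence of that letter in $w$ can be placed next to a neighbour — here one may need to first produce a conjugate of $w$ in which the two relevant letters are adjacent, which is legitimate since $L(I,R)$ is a circular language). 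If instead every letter of $J$ occurs exactly once, then $m = |J|$ and $w$ is (a conjugate of) an enumeration of all of $J$; then transitivity of $J$ with $\delta(S^J) \leq 3$ guarantees $G^J$ has a non-cut vertex $a$ adjacent (in $G^J$) to a vertex that is cyclically adjacent to $a$ in $w$ — or, after passing to a suitable conjugate, one arranges this — and the same deletion/splice argument applies.

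The main obstacle is the bookkeeping in the inductive step: guaranteeing simultaneously that (i) the circular word with one letter removed has transitive alphabet, (ii) it has length $\geq 2$, and (iii) there is a rule in $R$ joining the removed letter to one of its cyclic neighbours in some conjugate of $w$. Handling (i) and (iii) together forces a small case analysis (a repeated letter versus the all-distinct case) and, in the all-distinct case, an appeal to the structure of connected graphs of diameter $\leq 3$ to locate a non-cut vertex with the right adjacency — this is where Remark \ref{rem:J} and the hypothesis $\delta_\ell(S) \leq 3$ (as opposed to merely $\delta(S)\leq 3$) genuinely enter, since we must control \emph{all} transitive subsets arising as alphabets of sub-words. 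A convenient reformulation to keep the induction clean is: for a transitive subset $J$ with $\delta_\ell(S^J)\le 3$, every word $w$ over $J$ using all letters of $J$ at least once, with $|w|\ge 2$, circularizes into $L$; one then induces on $|w|$ with $J$ fixed, which makes clause (i) automatic as long as the deleted letter is a repeated one, and reduces the genuinely delicate argument to the single "spanning, multiplicity-free" configuration where the graph-theoretic input is applied once.

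Once the membership statement is established, the displayed identity follows: the inclusion $\supseteq$ is exactly the membership claim applied to each transitive $J \subseteq I$ (noting $\1(\cap_{a\in J} J^*aJ^*)$ is precisely the set of $\1 w$ with $alph(w) = J$ and $|w|\ge |J|\ge$ the relevant bound, the small/degenerate cases being absorbed into the $I$ term), while $\subseteq$ holds because any $\1 w \in L(I,R)\setminus I$ is obtained by splicing, hence $|w| \ge 2$ and $alph(w)$ is a union of alphabets of initial letters connected through rules, i.e. contained in — and in fact equal to — a transitive subset of $I$.
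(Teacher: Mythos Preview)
Your overall plan---induction on $|w|$---is the same as the paper's sketch. The gap is in your Case~1. You claim that when some letter is repeated you may delete an occurrence of it and splice it back ``using the rule to a cyclically adjacent letter (such a rule exists because $G^J$ is connected)''; the parenthetical about passing to a conjugate cannot help, since $\1 w$ is already a conjugacy class and the cyclic neighbours of each occurrence are fixed. The claim is false. Take $J=\{a,b,c,d\}$ with $R$ generated (under symmetry) by $(a,b),(a,c),(a,d),(b,d)$; then $G^J$ is a connected cograph, so $\delta_\ell\le 3$. For $w=abcdc$ the only repeated letter is $c$, at positions $3$ and $5$, with cyclic predecessors $b$ and $d$; neither $(b,c)$ nor $(d,c)$ lies in $R$, and one checks that no splicing of $\1 c$ into either residual word $\1 abdc$ or $\1 abcd$ reproduces $\1 abcdc$. (Note also that splicing the single letter $\1 w_t$ into $\1 w'$ recovers $\1 w$ only via a rule joining $w_t$ to its \emph{predecessor} $w_{t-1}$, not to its successor as your indexing suggests.) The induction does go through for this $w$, but only by deleting the \emph{non-repeated} letter $b$: its predecessor $a$ satisfies $(a,b)\in R$, and $G^{\{a,c,d\}}$ is connected.

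What is missing is a direct use of the cograph structure rather than the repeated/non-repeated dichotomy. Since $\delta_\ell\le 3$ forces $G^J$ to be $P_4$-free, the complement $\overline{G^J}$ is disconnected whenever $G^J$ is connected with $|J|\ge 2$; write $J=J_1\sqcup J_2$ with every $J_1$--$J_2$ pair an edge of $G^J$. In any circular word with alphabet $J$ there is then a position $i$ with $w_{i-1}\in J_1$ and $w_i\in J_2$, automatically giving $(w_{i-1},w_i)\in R$. Choosing the decomposition so that $|J_2|\ge 2$ (possible once $|J|\ge 3$) guarantees that deleting $w_i$ leaves an alphabet that is either $J$ (if $w_i$ is repeated) or $J\setminus\{w_i\}$, which is still connected through the surviving complete bipartite part. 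This single argument replaces your case split and is exactly where the hypothesis $\delta_\ell\le 3$ is used.
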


\begin{sketch}
Let $w \in I^+$, with $|w| \geq 2$.
It can be proved by induction on $|w|$ that $\1 w \in L(I,R)$
if and only if $J = alph(w)$ is transitive.
Clearly, $alph(w) = J$
if and only if $w \in \cap_{a \in J} J^*aJ^*$, i.e.,
$\{ w \in J^* ~|~ |w|_{a} > 0$, for all $a \in J \} =
\cap_{a \in J} J^*aJ^*$.
\end{sketch}

Recall that a \emph{cograph} is a $P_{4}$-free graph, i.e., a graph which does not contain the path $P_{4}$ on 4 vertices as an induced subgraph. Cographs have been deeply investigated in graph theory, and linear-time recognition algorithms have been provided for this
class of graphs \cite{spinrad}.
As a corollary of Theorem \ref{CR}, a marked system $S$
generates a regular circular language if and only if its associated graph
is a cograph \cite{nat}.
Finally, we may consider
\emph{marked systems with self-splicing}, that is marked systems $S=(I,R)$ in which the self-splicing operation is allowed. The language $\overline{L}(I,R)$
generated by a marked system with self-splicing $S=(I,R)$
is defined by taking into account both splicing and self-splicing \cite{hb}.
In \cite{nat}, it has been proved that
a marked system with self-splicing always generates a regular circular language, which has a very simple structure.

\section{Monotone Complete Systems} \label{Scomplete}

For a special subclass of CSSH systems, we may characterize those
of them that generate regular circular languages, namely
{\it $(i,j)$-complete systems} (or monotone systems).
They were introduced in \cite{completi}
and their definition is recalled below.

\begin{definition} \label{completeCSSH}
An {\rm $(i,j)$-complete system} $S = (A, I, R)$ is a
finite system such that:
\begin{itemize}
\item[(1)]
$S$ is an $(i,j)$-CSSH system (i.e.,
there are fixed positions $i \in \{1,2\}$
and $j \in \{3,4\}$ such that for all
$r = u_1 \# u_2 \$ u_3 \# u_4 \in R$
we have $u_1u_2=u_i \in A$ and
$u_3u_4=u_j \in A$);
\item[(2)]
$S$ is a {\rm complete system}, i.e.,
for each $a,b \in A$, there is a rule
$u_1 \# u_2 \$ u_3 \# u_4 \in R$
such that $u_i = a$, $u_j = b$.
\end{itemize}
$S$ is a monotone complete system if there are
$i \in \{1,2\}$
and $j \in \{3,4\}$ such that $S$ is an $(i,j)$-complete system.
\end{definition}

\begin{example} \label{Ex0}
{\rm Let $S = (A, I, R)$, where $A = \{a, b \}$,
$I = \1 \{ab \}$ and
$R = \{a \# 1 \$ a \# 1,$ $b \# 1 \$ b \# 1,$ $a \# 1 \$ b \# 1 \}$.
Therefore $S$ is a $(1,3)$-complete system.
Analogously, let
$R' = \{1 \# a \$ 1 \# a,$  $1 \# b \$ 1 \# b,$ $1 \# a \$ 1 \# b \}$,
$R'' = \{1 \# a \$ a \# 1, 1 \# b \$ b \# 1, 1 \# a \$ b \# 1 \}$.
Then, $S' = (A, I, R')$ is a $(2,4)$-complete system
and $S'' = (A, I, R'')$ is a $(2,3)$-complete system.}
\end{example}

In \cite{completi}, the authors proved that the
class of languages generated
by $(i,j)$-complete systems is equal to the class
of languages generated by $(i',j')$-complete systems,
for $(i,j), (i',j') \in \{(1,3), (2,4), (2,3) \}$,
$(i,j) \not = (i',j')$.
Moreover, they proved that these systems
have the same computational
power as simple systems
with only one rule of a specific type.
Finally we recall below
the characterization of monotone complete systems
generating a regular circular language
given in \cite{completi}.
In view of the above mentioned results,
some assumptions will be made. Precisely,
in what follows $S = (A, I, R)$ will be a $(1,3)$-complete system,
with $1 \not \in I$. Any rule
$a_i \# 1 \$ a_j \# 1$ in $R$ will be denoted by the pair of
letters $(a_i, a_j)$ and $R$ may be identified with $A \times A$.
Moreover $alph(I) = A$.
The above characterization has been obtained thanks to the close relation
between complete systems and a class of context-free languages introduced
in \cite{EHRtcs}, whose definition is recalled below.

\subsection{Pure Unitary Languages} \label{completiTCS}

A {\it pure unitary grammar} is a pair $G = (A, Y)$, where
$A$ is a finite nonempty alphabet and $Y \subseteq A^+$ is a finite
set. Then, we consider the set of productions
$\{1 \rightarrow y ~|~ y \in Y \}$ and the derivation relation
$\Rightarrow^*_Y$ of the semi-Thue system
$T(Y) = \{\langle 1, y \rangle ~|~ y \in Y \}$, induced by $Y$.
We recall that $\Rightarrow^*_Y$ is the transitive and reflexive closure
of the relation $\Rightarrow_Y$, defined by $uv \Rightarrow_Y uyv$,
for any $u,v \in A^*$, $y \in Y$.

A reflexive and transitive relation on a set is called a {\it quasi-order}.
Then, $\Rightarrow^*_Y$ is a quasi-order on $A^*$ and, for brevity,
we will denote it by $\leq_Y$.
For a given quasi-order on a set $X$, one can consider
the {\it upward closure} of a subset of $X$, defined below.

\begin{definition} \label{upwardclosure} \cite{EHRtcs}
For any quasi order $\leq$ on a set $X$ and any
subset $Z$ of $X$, the {\rm upward closure} of
$Z$ (with respect to $\leq$) is given by
$cl_{\leq}(Z) = \{x \in X ~|~ \exists y \in Z$
such that $y \leq x \}$. $Z$ is $\leq${\rm-closed}
(or simply {\rm closed}) if $cl_{\leq}(Z) = Z$
(i.e., if $y \in Z$ and $y \leq x$ implies that $x \in Z$).
\end{definition}

Given a pure unitary grammar $G = (A, Y)$, the
{\it language of} $G$, denoted $L(G)$ is $cl_{\leq_Y}(1)$.
A word $w \in L(G)$ if and only if $1 \Rightarrow^*_Y w$.
A language of the form $L(G)$ for some pure unitary grammar $G$, is
called a {\it pure unitary language}.

Pure unitary languages may also be defined by
the operations of {\it insertion} and iterated insertion.
They are variants of classical operations on formal languages
and we recall their definitions below.

\begin{definition} \cite{Hau}
Given $Z, Y \subseteq A^*$, the operation of {\it insertion},
denoted by $\leftarrow$, is defined by $Z \leftarrow Y =
\{z_1yz_2 ~|~ z_1z_2 \in Z \mbox{ and } y \in Y \}$. The
operation of {\it iterated insertion}, denoted
by $\leftarrow_*$, is defined inductively from the
operation of insertion by $Y^{\leftarrow_0} = \{1 \}$,
$Y^{\leftarrow_{i+1}} = Y^{\leftarrow_{i}} \leftarrow Y$
and $Y^{\leftarrow_{*}} = \cup_{i \geq 0} Y^{\leftarrow_{i}}$.
\end{definition}

The following result has been stated in \cite{Hau} with no proof.
We give a short proof of it for the sake of completeness.

\begin{proposition} \label{insertion-unitary} \cite{Hau}
A language $L \subseteq A^*$ is a pure unitary language
if and only if $L = Y^{\leftarrow_{*}}$ with $Y$ being a
finite set of nonempty words in $A^*$.
\end{proposition}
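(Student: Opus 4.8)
The plan is to prove both implications of the equivalence between pure unitary languages and iterated insertion closures of finite sets of nonempty words, essentially by unwinding the two definitions and observing that the single-step relations coincide in the right sense.

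First I would establish the forward direction: if $G = (A,Y)$ is a pure unitary grammar with $Y \subseteq A^+$ finite, then $L(G) = Y^{\leftarrow_*}$. By definition $L(G) = cl_{\leq_Y}(1) = \{w \in A^* \mid 1 \Rightarrow^*_Y w\}$, so it suffices to show that $1 \Rightarrow^*_Y w$ if and only if $w \in Y^{\leftarrow_*}$. The key observation is that a single derivation step $u \Rightarrow_Y v$ means $u = u_1 u_2$, $v = u_1 y u_2$ for some $y \in Y$ — which is exactly one application of the insertion operation $\{u\} \leftarrow Y \ni v$. Hence I would prove by induction on $i$ that the set of words reachable from $1$ in exactly $i$ steps is precisely $Y^{\leftarrow_i}$: the base case $i=0$ gives $\{1\} = Y^{\leftarrow_0}$, and the inductive step uses that $Y^{\leftarrow_{i+1}} = Y^{\leftarrow_i} \leftarrow Y$ together with the fact that inserting a word of $Y$ into some word of $Y^{\leftarrow_i}$ is the same as performing one more $\Rightarrow_Y$-step. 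Taking the union over all $i$ yields $L(G) = \bigcup_{i\geq 0} Y^{\leftarrow_i} = Y^{\leftarrow_*}$.

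For the converse, given $Y \subseteq A^+$ finite, I would simply take the pure unitary grammar $G = (A', Y)$ where $A'$ is any finite alphabet containing $alph(Y)$ (for instance $A' = alph(Y)$ itself, or one could note $A$ is already given); the argument above then shows $L(G) = Y^{\leftarrow_*}$, so $Y^{\leftarrow_*}$ is a pure unitary language. One should check the harmless point that the ambient alphabet does not matter, since insertions only ever use letters occurring in $Y$, and the derivation relation restricted to $(alph(Y))^*$ is the same.

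I do not expect a serious obstacle here, since the statement is essentially a reformulation: the whole content is the bookkeeping identity that one step of the semi-Thue rewriting $1 \to y$ coincides with one application of $\leftarrow Y$, and then that iterating $n$ rewriting steps matches $Y^{\leftarrow_n}$. The only mild subtlety is the careful handling of the empty word as the base case ($Y^{\leftarrow_0} = \{1\}$ versus $1 \Rightarrow^*_Y 1$ trivially) and making sure the induction is set up so that the ``exactly $i$ steps'' sets are monotone — which they are, since $Y \subseteq A^+$ forces every nonempty derivation to strictly increase length, though monotonicity is not even needed for the union to come out right. A clean induction on the number of derivation steps, with the insertion operation as the inductive bridge, completes the proof.
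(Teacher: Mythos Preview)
Your proposal is correct and follows essentially the same approach as the paper: both arguments reduce to showing $L(G) = Y^{\leftarrow_*}$ for $G=(A,Y)$ by observing that a single $\Rightarrow_Y$-step is exactly one application of $\leftarrow Y$, and then proceeding by induction on the number of steps (the paper splits this into two separate inductions for the two inclusions, while you package it as a single bidirectional induction on $i$, but the content is identical).
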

\begin{proof}
Let $Y$ be a finite set of nonempty words in $A^*$ and let $L = Y^{\leftarrow_{*}}$.
Consider the pure unitary grammar $G = (A, Y)$. It is easy to see that a word $w$ is in $L$ if
and only if $w$ is in $L(G)$. Of course this is true for the empty word. Otherwise, assume that
$w \in  Y^{\leftarrow_{i+1}}$. Therefore there are $z_1z_2 \in Y^{\leftarrow_{i}}$ and $y \in Y$
such that $w = z_1yz_2$. By using induction on $i$, we have $z_1z_2 \in L(G)$,
hence, by the definitions, $1 \Rightarrow^*_Y z \Rightarrow_Y w$
and we conclude that $w \in L(G)$.
Conversely, if $w \in L(G)$, then there is $z$,
with $z \not = w$, and such that $1 \Rightarrow^*_Y z \Rightarrow_Y w$.
Using once again induction, we have $z \in Y^{\leftarrow_{i}}$ and by the definition of
$\Rightarrow_Y$, we have $w \in  Y^{\leftarrow_{i+1}} \subseteq Y^{\leftarrow_{*}}$.

Let  $Y$ be a finite set of nonempty words in $A^*$ and let
$L = L(G)$ the language of the pure unitary grammar $G = (A,Y)$.
The same argument as below shows that $L = Y^{\leftarrow_{*}}$.
\end{proof}

The construction of a context-free grammar generating $Y^{\leftarrow_{*}}$
is obvious (see for instance \cite{completi}).
The following result has been proved in \cite{completi}.

\begin{lemma} \label{insertion}
For each $w, z \in Y^{\leftarrow_{*}}$, for
each $w_1, w_2 \in A^*$ such that
$w = w_1 w_2$, we have
$w_1 z w_2 \in Y^{\leftarrow_{*}}$.
\end{lemma}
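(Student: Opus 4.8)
The plan is to pass to the grammatical characterization of $Y^{\leftarrow_{*}}$ established in Proposition~\ref{insertion-unitary}: since $Y^{\leftarrow_{*}} = L(G)$ for the pure unitary grammar $G = (A,Y)$, membership in $Y^{\leftarrow_{*}}$ is equivalent to derivability from the empty word, i.e. $u \in Y^{\leftarrow_{*}}$ iff $1 \Rightarrow^*_Y u$. The only structural fact I would isolate first is the elementary ``stability under contexts'' of the one-step relation: if $a \Rightarrow_Y b$, then $uav \Rightarrow_Y ubv$ for all $u,v \in A^*$. This is immediate from the definition $u'v' \Rightarrow_Y u'yv'$ (with $y \in Y$): writing $a = a_1a_2$ and $b = a_1 y a_2$, one has $uav = (ua_1)a_2v \Rightarrow_Y (ua_1)y(a_2v) = ubv$. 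A one-line induction on the number of derivation steps then lifts this to the reflexive-transitive closure: $a \Rightarrow^*_Y b$ implies $uav \Rightarrow^*_Y ubv$.

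With this in hand the lemma is essentially a concatenation of two derivations. From $z \in Y^{\leftarrow_{*}}$ we get $1 \Rightarrow^*_Y z$, hence, contextualizing inside $w_1 \cdot \square \cdot w_2$, we obtain $w_1\, 1\, w_2 \Rightarrow^*_Y w_1 z w_2$, that is $w = w_1 w_2 \Rightarrow^*_Y w_1 z w_2$. From $w = w_1 w_2 \in Y^{\leftarrow_{*}}$ we get $1 \Rightarrow^*_Y w$. By transitivity of $\Rightarrow^*_Y$ we conclude $1 \Rightarrow^*_Y w_1 z w_2$, i.e. $w_1 z w_2 \in Y^{\leftarrow_{*}}$.

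If one prefers to stay at the level of the insertion operation itself, the same argument can be phrased as an induction on the least $i$ with $z \in Y^{\leftarrow_{i}}$. For $i = 0$ one has $z = 1$ and $w_1 z w_2 = w \in Y^{\leftarrow_{*}}$. For the inductive step, write $z = z_1 y z_2$ with $z_1 z_2 \in Y^{\leftarrow_{i}}$ and $y \in Y$; applying the inductive hypothesis to $w = w_1 w_2 \in Y^{\leftarrow_{*}}$ and to $z_1 z_2$ with the factorization $w = w_1 \cdot w_2$ yields $w_1 z_1 z_2 w_2 \in Y^{\leftarrow_{*}}$; then inserting $y$ at the seam between $w_1 z_1$ and $z_2 w_2$ gives $w_1 z_1 y z_2 w_2 = w_1 z w_2 \in Y^{\leftarrow_{*}}$. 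Either way, there is no genuine obstacle here: the single point that makes the argument go through is that insertion (equivalently, $\Rightarrow_Y$) may be performed at an arbitrary position of a word, and in particular at the junction point $w_1 z_1 \mid z_2 w_2$; the rest is bookkeeping, with the mild care of running the induction on the structure of $z$ rather than of $w$.
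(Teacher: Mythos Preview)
Your proof is correct; both the derivation-based argument via context stability of $\Rightarrow_Y$ and the direct induction on the insertion depth of $z$ are clean and complete. Note, however, that the paper does not actually prove Lemma~\ref{insertion}: it only records the statement and refers to \cite{completi} for the proof, so there is no in-paper argument to compare yours against.
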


We recall below
a characterization of regular pure unitary languages,
given by means of a decidable property, stated in
\cite{EHRtcs}.

\begin{definition} \label{unavoidable}
Let $A$ be an alphabet, let $Y \subseteq A^*$ be a finite set.
$Y$ is \textit{unavoidable} in $A^*$ if
there exists $k_0 \in \N$ such that any word $x \in A^*$,
with $|x| > k_0$, has a factor $y$ in $Y$. The integer
$k_0$ is called subword avoidance bound for $Y$.
\end{definition}

\begin{theorem} \label{RegCar}
For any pure unitary grammar $G = (A, Y)$, the language $L(G)$ is
regular if and only if $Y$ is unavoidable in $A^*$.
\end{theorem}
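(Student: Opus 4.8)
\begin{sketch}
The plan is to prove the two implications separately, relying throughout on the reformulation of Proposition~\ref{insertion-unitary}: since $L(G)=Y^{\leftarrow_{*}}$, a word $w$ lies in $L(G)$ exactly when $w$ can be reduced to $1$ by repeatedly erasing occurrences of words of $Y$ (write $w\to w'$ for one such erasure step). Two elementary observations: $Y$ is unavoidable with bound $k_0$ precisely when the set $F$ of $Y$-avoiding words is finite (all of length at most $k_0$), and every $\to$-irreducible word is $Y$-avoiding, hence short when $Y$ is unavoidable. As happens for the pure unitary grammars attached to complete splicing systems, we assume $A=alph(Y)$; in particular $M:=\max_{y\in Y}|y|\ge 2$ whenever $Y$ is avoidable (if $M=1$ then $Y\subseteq A$, hence $Y=A$, which is unavoidable).

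For ``$L(G)$ regular $\Rightarrow Y$ unavoidable'' I would argue by contraposition, in the spirit of Lemma~\ref{lem:chain}. Suppose $Y$ is avoidable and $L(G)$ is regular, and let $N$ be the constant of Proposition~\ref{bounded}. Using $A=alph(Y)$, a padding argument produces, for every $n$, a $Y$-avoiding word $u_n$ of length $n$ that is a prefix of some word of $L(G)$; fix $v_n$ with $u_nv_n\in L(G)$. Then Proposition~\ref{bounded} yields $v_n'$ with $|v_n'|\le N$ and $u_nv_n'\in L(G)$. The combinatorial heart of this direction is a \emph{cost lemma}: if $z=puq\in L(G)$ with $u$ a $Y$-avoiding factor, then $|p|+|q|\ge |u|/(M-1)$. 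Granting it, $|v_n'|\ge n/(M-1)$, so $N\ge n/(M-1)$ for all $n$, a contradiction. To prove the cost lemma, fix a reduction of $puq$ to $1$ and track which letters of $u$ still survive: at every stage the surviving ones form a single block, which is a factor of $u$ and hence $Y$-avoiding, so any erasure step that removes a letter of $u$ also removes at least one letter coming from $p$ or $q$, and removes at most $M-1$ letters of $u$; as all $|u|$ letters of $u$ must disappear, at least $|u|/(M-1)$ steps each consume a distinct letter of $pq$.

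For the converse — $Y$ unavoidable $\Rightarrow L(G)$ regular — the plan is to exhibit a finite automaton for $L(G)$. Since every word of length $>k_0$ contains a word of $Y$ and is therefore $\to$-reducible, the $\to$-irreducible words all lie in the finite set $F$, and $w\in L(G)$ iff $1$ is among the irreducible words reachable from $w$. The automaton reads $w$ from left to right while maintaining a \emph{bounded} summary of the prefix already read — morally, the set of residues that prefix can assume under partial reductions, truncated to bounded length — and accepts when the empty residue is attainable; unavoidability is precisely what keeps this state set finite. The one delicate point, and really the only difficulty in the whole proof, is that the erasure system $\to$ is \emph{not confluent}, even for unavoidable $Y$: for $Y=\{aa,bb,ab\}$ one has both $abba\to aa\to 1$ and $abba\to ba$ (stuck), and, worse, a reduction of a word $vc$ may have to erase a factor straddling the junction between $v$ and its last letter $c$ \emph{before} $v$ has been fully reduced, so the naive ``reduce the prefix as far as possible'' rule computes the wrong thing. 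The remedy is to make the summary set-valued — a subset-construction-style state following all reductions at once — and to pin down exactly the boundary information that makes the transition function well defined; establishing that this bounded, finite-state datum really does decide membership in $L(G)$ despite the lack of confluence is the technical core of the argument, and is the part carried out in~\cite{EHRtcs}.
\end{sketch}
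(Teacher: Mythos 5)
Your ``cost lemma'' and its proof sketch are correct and form a genuinely nice elementary ingredient, but both halves of your argument have real gaps. For the direction ``$L(G)$ regular $\Rightarrow$ $Y$ unavoidable'', the padding step fails: it is not true that an avoidable $Y$ admits arbitrarily long $Y$-avoiding words occurring as \emph{prefixes} of words of $L(G)$. Take $Y=\{aa,ab\}$ over $A=alph(Y)=\{a,b\}$ (so your standing assumption is satisfied): the $Y$-avoiding words are exactly those in $b^*\cup b^*a$, hence $Y$ is avoidable, yet an easy induction on derivations shows that every nonempty word of $L(G)=Y^{\leftarrow_{*}}$ begins with $a$, so the only $Y$-avoiding prefixes of words of $L(G)$ are $1$ and $a$. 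Your cost lemma would still do its job if applied to a long $Y$-avoiding \emph{factor} together with a two-sided analogue of Proposition~\ref{bounded} (which does hold for regular languages), but you would then owe a proof that, whenever $Y$ is avoidable, words of $L(G)$ contain arbitrarily long $Y$-avoiding factors --- in the example one must locate $b^n$ inside $a^nb^n\in L(G)$ rather than use $b^n$ itself --- and that existence statement is neither obvious nor argued anywhere in your sketch.

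For the converse, ``$Y$ unavoidable $\Rightarrow$ $L(G)$ regular'', you correctly isolate the obstruction (non-confluence of the erasure system) but you do not overcome it: the bounded set-valued summary is never defined, and you explicitly defer the technical core to \cite{EHRtcs}. That deferral is not available to you, because the proof of this direction in \cite{EHRtcs} is not an automaton construction at all: it observes that $\leq_Y$ is a monotone quasi-order and that $L(G)=cl_{\leq_Y}(1)$ is $\leq_Y$-closed, and then combines the generalized Higman theorem (Theorem~\ref{HigmanGen}: $Y$ is unavoidable if and only if $\leq_Y$ is a well quasi-order) with the generalized Myhill--Nerode theorem (Proposition~\ref{myhillnerodegenEHR}) to conclude regularity. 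This is precisely why the survey remarks, immediately after Theorem~\ref{RegCar}, that it would be interesting to find a proof independent of these two results: what you sketch in this direction is an open-ended project rather than a proof. To salvage your elementary strategy you would need (i) a precise finite state space for the subset-style summary together with a proof that it decides membership in $L(G)$ despite non-confluence, and (ii) the factor-existence lemma above; failing that, the correct route for this direction is the well-quasi-order argument the paper records.
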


Theorem \ref{RegCar} gave a necessary and sufficient condition
on a pure unitary grammar that guarantee the language of the grammar to be regular.
In the proof of this theorem, Ehrenfeucht, Haussler and Rozenberg
stated two other important intermediate results, both recalled below.
It would be interesting to find a proof of the former theorem independent of the latter
results.
We recall that a quasi-order $\leq$ on $A^*$ is {\rm monotone}
if $u \leq v$ and $u' \leq v'$ imply that $u u' \leq v v'$,
for all $u, v, u', v' \in A^*$.
Moreover, a quasi-order $\leq$ on $A^*$ is a {\rm well quasi-order} if
for each infinite sequence $\{x_i \}$
of elements in $A^*$, there exist $i < j$ such that
$x_i \leq x_j$.

\begin{theorem} \label{HigmanGen}[(generalized Higman theorem)]
For any finite set $Y \subseteq A^+$, the quasi-order $\leq_Y$ is a well quasi-order on
$A^*$ if and only if $Y$ is unavoidable in $A^*$.
\end{theorem}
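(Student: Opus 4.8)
The plan is to treat the two implications separately; the substance lies in showing that unavoidability forces $\leq_Y$ to be a well quasi-order.

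\textbf{Necessity of unavoidability.} Argue by contraposition. If $Y$ is avoidable, then for every $k$ there is a $Y$-avoiding word of length $>k$, so the (prefix-closed) set of $Y$-avoiding words is infinite, and K\"onig's lemma yields an infinite $Y$-avoiding word $\omega$ with prefixes $p_1\prec p_2\prec\cdots$. Since $u\leq_Y v$ implies $|u|\le|v|$ and $|u|=|v|$ implies $u=v$, a relation $p_i\leq_Y p_j$ with $i\neq j$ would exhibit $p_j$ as the result of at least one insertion of a word $z\in Y$ into a shorter word; but then $p_j$ contains $z$ as a factor, contradicting that $p_j$ is $Y$-avoiding. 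Hence $\{p_n\}$ is an infinite antichain for $\leq_Y$, so $\leq_Y$ is not a well quasi-order.

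\textbf{Reformulation.} Fix a subword avoidance bound $k_0$ for $Y$ and put $M=Y^{\leftarrow_{*}}$. Using Lemma~\ref{insertion} one first checks the structural identity
\[
 cl_{\leq_Y}(a_1\cdots a_n)\;=\;M\,a_1\,M\,a_2\,M\cdots M\,a_n\,M
\]
for any word displayed letter by letter as $a_1\cdots a_n$: an iterated $Y$-insertion deposits, in each gap between two consecutive original letters, a word that remains in $M$ by Lemma~\ref{insertion}, and conversely every such factorization is reached. Thus $u\leq_Y v$ holds exactly when $v$ cuts as $v=g_0a_1g_1\cdots a_ng_n$ with $u=a_1\cdots a_n$ and each $g_i\in M$; in particular $\leq_Y$ is a suborder of the scattered subword order $\preceq$, which is a well quasi-order on the finite alphabet $A$ by Higman's lemma. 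Two further facts will be used: $\leq_Y$ is compatible with concatenation, and deleting a factor in $Y$ from a word produces a $\leq_Y$-smaller word.

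\textbf{Sufficiency.} The goal is to show this ``$M$-insertion order'' is a well quasi-order, and I would do so by a Nash--Williams minimal bad sequence argument modelled on the proof of Higman's lemma. Assume it fails and let $(x_n)$ be a bad sequence with $x_1$ as short as possible, then $x_2$ as short as possible given $x_1$, and so on. Only finitely many $x_n$ can have length $\le k_0$ (two equal ones would be $\leq_Y$-comparable), so for $n$ large every $x_n$ has a factor in $Y$; split off the leftmost such factor $y_n\in Y$ together with the preceding segment $\mu_n$, which is $Y$-free and hence of bounded length: $x_n=\mu_n y_n t_n$. By pigeonhole infinitely many $x_n$ share the same $(\mu_n,y_n)=(\mu,y)$. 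On that subsequence $\mu t_n\leq_Y x_n$ (reinsert $y$ between $\mu$ and $t_n$), so if $(\mu t_{n_k})_k$ were bad one could replace $x_{n_1}$ by the strictly shorter $\mu t_{n_1}$ and keep a bad sequence, contradicting minimality (this uses the deletion fact and the check that no $x_i$ lies $\leq_Y$-below any $\mu t_{n_k}$). Hence $(\mu t_{n_k})_k$ is good, and one wants to lift the resulting good pair back through the common block $\mu y$ to a good pair of $(x_n)$, producing the desired contradiction.

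\textbf{Main obstacle.} This last lifting is the hard point, where the proof must use more than monotonicity: unlike in Higman's lemma, $\leq_Y$ has no least element ($1\leq_Y w$ only for $w\in M$) and the factorization $x_n=\mu_n y_n t_n$ is not canonical, so a good pair $\mu t_{n_a}\leq_Y\mu t_{n_b}$ need not be witnessed by an embedding compatible with prepending $y$ after $\mu$ (one can only partially control this from the fact that $\mu$ is $Y$-free while every nonempty element of $M$ carries a factor of $Y$). Overcoming it seems to require a finer grip on $M=Y^{\leftarrow_{*}}$, concretely the finiteness of its syntactic congruence — equivalently the regularity of $Y^{\leftarrow_{*}}$ — which here must be obtained directly from the avoidance bound $k_0$, not quoted from Theorem~\ref{RegCar} (whose proof rests on the present theorem). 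With such a finite invariant in hand one can also bypass the minimal bad sequence entirely and reduce to Higman's lemma, by colouring words so that $\preceq$-comparable words of the same colour are automatically $\leq_Y$-comparable; for a one-letter alphabet the colour is simply the length modulo $\min_{y\in Y}|y|$.
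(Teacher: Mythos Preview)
The paper does not contain its own proof of this theorem: it is quoted from Ehrenfeucht, Haussler and Rozenberg \cite{EHRtcs} as one of two intermediate results used there to derive Theorem~\ref{RegCar}. So there is no in-paper argument to compare against; your proposal must stand on its own.

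Your necessity direction is fine (and could be shortened: any infinite set of pairwise distinct $Y$-avoiding words is already an antichain for $\leq_Y$, since a strict relation $u<_Y v$ forces a factor from $Y$ in $v$; K\"onig is not needed).

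The sufficiency direction, however, is not a proof but an outline that you yourself flag as incomplete. The Nash--Williams scheme stalls exactly where you say it does: from a good pair $\mu t_{n_a}\leq_Y \mu t_{n_b}$ you cannot in general conclude $\mu y t_{n_a}\leq_Y \mu y t_{n_b}$, because the witnessing chain of $Y$-insertions for $\mu t_{n_a}\leq_Y \mu t_{n_b}$ need not respect the boundary between $\mu$ and $t_{n_b}$. Switching the auxiliary sequence to $(t_{n_k})$ would make the lifting trivial by monotonicity, but then the minimality step breaks instead, since $t_{n_k}\leq_Y x_{n_k}$ would require $\mu y\in Y^{\leftarrow_*}$, which fails whenever $\mu\neq 1$. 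So neither variant closes, and your ``Main obstacle'' paragraph is an honest acknowledgement that the argument is unfinished.

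Your proposed repairs are also only programmatic. Appealing to the regularity of $Y^{\leftarrow_*}$ is circular in the intended logical order (as you note), and you do not show how to extract it directly from the avoidance bound $k_0$. The colouring idea---find a finite invariant so that $\preceq$-comparability plus equal colour forces $\leq_Y$-comparability---is attractive, and your one-letter example is correct, but for $|A|\ge 2$ you give no candidate invariant and no argument that one exists. As written, the sufficiency half remains a plan rather than a proof; to complete it you would need to supply either the missing combinatorial lemma that rescues the minimal-bad-sequence step, or an explicit finite colouring (for instance via a congruence of bounded index on $A^*$ that saturates $Y^{\leftarrow_*}$, built directly from $k_0$).
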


\begin{proposition} \label{myhillnerodegenEHR}[(generalized Myhill-Nerode theorem)]
A language $L \subseteq A^*$
is regular
if and only if it is $\leq$-closed under
some monotone well quasi-order $\leq$ on
$A^*$.
\end{proposition}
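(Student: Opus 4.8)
\begin{sketch}
The plan is to prove the two implications separately; the forward direction is routine, and the converse carries the real content, which is a short argument built on two standard facts about well quasi-orders.

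For ``$L$ regular $\Rightarrow$ $L$ is $\leq$-closed under some monotone well quasi-order'', I would take $\leq$ to be the syntactic congruence $\sim_L$ of $L$, regarded as a quasi-order (so $u\leq v$ iff $u\sim_L v$). Three checks then suffice: $\sim_L$ is a congruence of $A^*$, hence $\leq$ is monotone; $L$ regular means $\sim_L$ has finite index, so in any infinite sequence two terms are $\sim_L$-equivalent and thus $\leq$ is a well quasi-order; and $L$ is a union of $\sim_L$-classes, so it is $\leq$-closed.

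For the converse, let $\leq$ be a monotone well quasi-order with $L$ $\leq$-closed, and argue via the Myhill--Nerode theorem that $L$ has only finitely many left quotients $u^{-1}L=\{v\in A^*\mid uv\in L\}$. The two basic observations, both immediate from monotonicity of $\leq$ together with upward-closedness of $L$, are: (i) each $u^{-1}L$ is itself upward-closed (if $uv\in L$ and $v\leq v'$ then $uv\leq uv'$, so $uv'\in L$); and (ii) $u\mapsto u^{-1}L$ is monotone, i.e. $u\leq u'$ implies $u^{-1}L\subseteq u'^{-1}L$ (if $uv\in L$ and $u\leq u'$ then $uv\leq u'v$, so $u'v\in L$). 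Next I would record the key finiteness lemma: the family of upward-closed subsets of $(A^*,\leq)$ satisfies the ascending chain condition. Indeed, from a strictly increasing chain $U_1\subsetneq U_2\subsetneq\cdots$ pick $x_n\in U_{n+1}\setminus U_n$; the well quasi-order property gives $i<j$ with $x_i\leq x_j$, and then upward-closedness of $U_{i+1}$ forces $x_j\in U_{i+1}\subseteq U_j$, contradicting $x_j\notin U_j$. Now suppose $L$ had infinitely many distinct left quotients and pick words $u_1,u_2,\dots$ realising pairwise distinct quotients. Since every infinite sequence in a well quasi-order contains an infinite non-decreasing subsequence, after passing to a subsequence we may assume $u_1\leq u_2\leq\cdots$; by (ii) this gives $u_1^{-1}L\subseteq u_2^{-1}L\subseteq\cdots$, and since the quotients remain pairwise distinct the inclusions are strict. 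By (i) these sets are upward-closed, so this contradicts the ascending chain condition. Hence $L$ has finitely many left quotients and is regular.

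The main obstacle is essentially packaged into the ascending chain condition lemma together with the ``extract a non-decreasing subsequence'' characterisation of well quasi-orders; once these are in hand the rest is a short chase through monotonicity. (One could instead phrase the converse with the two-sided contexts $\Gamma_L(w)$ and the two-sided Myhill--Nerode criterion, using the product well quasi-order on $A^*\times A^*$, but left quotients keep the bookkeeping lighter.)
\end{sketch}
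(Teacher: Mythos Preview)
Your proof is correct. The forward direction via the syntactic congruence is routine, and your converse argument---monotonicity gives that left quotients are upward-closed and that $u\mapsto u^{-1}L$ is order-preserving, then the ascending chain condition for upward-closed sets in a well quasi-order kills any infinite family of distinct quotients---is the standard line and is carried out cleanly.

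There is, however, nothing to compare against: the paper does not supply its own proof of this proposition. It merely records the statement as one of two intermediate results established by Ehrenfeucht, Haussler and Rozenberg in \cite{EHRtcs} on the way to Theorem~\ref{RegCar}, and in fact remarks that it would be interesting to find a proof of Theorem~\ref{RegCar} \emph{independent} of this proposition and Theorem~\ref{HigmanGen}. Your sketch is essentially the argument one finds in \cite{EHRtcs}; the only stylistic difference is that you phrase things in terms of left quotients rather than two-sided contexts, which, as you note, lightens the bookkeeping without changing the substance.
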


\subsection{Pure Unitary Grammars and Monotone Complete Splicing Systems}

In \cite{completi} the authors stated the following result:
a circular language $L$ is generated by a $(1,3)$-complete system $S = (A, I, R)$
if and only if there exists a finite language $Y \subseteq A^+$, closed under the
conjugacy relation and such that
$Lin(L) = Y^{\leftarrow_{*}} \setminus \{1 \}$.
We differently state the same result below.

\begin{theorem} \label{Equiv1}
The following conditions are equivalent:
\begin{itemize}
\item[(1)]
There exists a $(1,3)$-complete system $S = (A, I, R)$ such
that $L = L(S)$.
\item[(2)]
There exists a flat splicing system
${\cal S}=(A, Y, R')$, where $Y \subseteq A^+$ is a
finite language closed under the conjugacy
relation and $R' = \{ \fagnotrul{a}{1}{1}{b} ~|~ a,b \in A \}$, such that
$\fagnotlang{{\cal S}} = Lin(L)$.
\item[(3)]
There exists a finite language $Y \subseteq A^+$
such that $Y$ is closed under the
conjugacy relation and
$Lin(L) = Y^{\leftarrow_{*}} \setminus \{1 \}$.
\end{itemize}
\end{theorem}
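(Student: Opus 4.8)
The plan is to prove the cycle of implications $(1)\Rightarrow(3)\Rightarrow(2)\Rightarrow(1)$, since $(3)$ is the ``static'' characterization on which the existing result of \cite{completi} rests, $(2)$ is the new flat-splicing reformulation we want to record, and $(1)$ is the original circular model. For $(1)\Leftrightarrow(3)$ we may essentially quote the statement from \cite{completi} recalled just above the theorem, so the real content is the equivalence of $(2)$ with the other two conditions, and the bridge is Lemma~\ref{insertion} together with Proposition~\ref{insertion-unitary}.

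\textbf{Step $(3)\Rightarrow(2)$.} Assume $Lin(L) = Y^{\leftarrow_*}\setminus\{1\}$ with $Y\subseteq A^+$ finite and conjugacy-closed. Take the flat system ${\cal S}=(A,Y,R')$ with $R' = \{\fagnotrul{a}{1}{1}{b}\mid a,b\in A\}$. I claim $\fagnotlang{{\cal S}} = Y^{\leftarrow_*}\setminus\{1\}$. The inclusion $\supseteq$: each rule $\fagnotrul{a}{1}{1}{b}$ applied to $u = x a\cdot b y$ and $v = z$ (note the handles force $\gamma=1$, $\delta=b$ on the second word, so $v$ may be any word followed by a $b$; reading the definition of a production, $v = \gamma z\delta = z b$) produces $x a\cdot z b\cdot b y$, i.e.\ it inserts the word $zb$ between the letter $a$ and the letter $b$ of $u$. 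Since $alph(I)=A$ is not an a priori hypothesis here, one must be slightly careful, but as $Y$ is conjugacy-closed every element of $Y$ of length $\geq 2$ can be cyclically rotated so that, for the insertion step, the relevant letters are available; iterating, every word obtainable by iterated insertion of words of $Y$ into a word of $Y$ is generated — this is exactly the content of Lemma~\ref{insertion}, which guarantees $w_1 z w_2\in Y^{\leftarrow_*}$ whenever $w=w_1w_2\in Y^{\leftarrow_*}$ and $z\in Y^{\leftarrow_*}$, so the closure of $Y$ under $R'$ is contained in $Y^{\leftarrow_*}$, and conversely every single insertion of a $y\in Y$ is realized by some rule of $R'$ (choosing the two boundary letters of the split). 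Dropping $1$ is automatic since the initial set is $Y\subseteq A^+$ and insertions never delete letters. Hence $\fagnotlang{{\cal S}} = Y^{\leftarrow_*}\setminus\{1\} = Lin(L)$.

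\textbf{Step $(2)\Rightarrow(1)$.} Given ${\cal S}=(A,Y,R')$ as in $(2)$ with $\fagnotlang{{\cal S}}=Lin(L)$, I would build the $(1,3)$-complete system $S=(A,I,R)$ with $I=\1 Y$ and $R=\{a\#1\$b\#1\mid a,b\in A\}$ (all pairs, so $S$ is complete by construction, and it is a $(1,3)$-CSSH system). The point is that a $(1,3)$-circular splicing step ``$\1 xa,\ \1 yb \vdash \1 xayb$'' is, after circularization, precisely the insertion realized by the flat rule $\fagnotrul{a}{1}{1}{b}$: inserting the word $yb$ right after the letter $a$. Using Lemma~\ref{insertion} again (equivalently, Proposition~\ref{insertion-unitary}), the iterated circular splicing closure of $\1 Y$ under $R$ has full linearization exactly $Y^{\leftarrow_*}\setminus\{1\}$, which equals $Lin(L)$; since a circular language is determined by its full linearization, $L(S)=L$. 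Conversely $(1)\Rightarrow(3)$ is the theorem of \cite{completi} quoted above (a circular language $L$ is $(1,3)$-complete-generated iff some finite conjugacy-closed $Y\subseteq A^+$ satisfies $Lin(L)=Y^{\leftarrow_*}\setminus\{1\}$), closing the cycle.

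\textbf{The main obstacle} I expect is bookkeeping at the boundary between circular words and their linearizations in Step $(2)\Rightarrow(1)$ and Step $(3)\Rightarrow(2)$: one must check that the conjugacy-closedness of $Y$ is exactly what makes the ``linear'' iterated-insertion operation $Y^{\leftarrow_*}$ invariant under conjugacy (so that $\1(Y^{\leftarrow_*})$ is a well-defined circular language and its full linearization is again $Y^{\leftarrow_*}$), and that no spurious words are produced by rules whose boundary letters happen not to occur — here one invokes that $alph(I)=A$, which may be assumed without loss of generality for complete systems, or else one restricts $A$ to $alph(Y)$. Everything else is a routine translation between the three formalisms via Lemma~\ref{insertion} and Proposition~\ref{insertion-unitary}.
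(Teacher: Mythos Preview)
Your cycle is logically sound once one grants the $(1)\Leftrightarrow(3)$ equivalence from \cite{completi}, but the route differs from the paper's in a way that sidesteps its main point. The paper proves the theorem through two self-contained results: Proposition~\ref{PEquiv1} establishes $(1)\Leftrightarrow(2)$ by a direct induction matching circular $(1,3)$-CSSH productions with flat productions over $Y=Lin(I)$ (and this holds for any symmetric $R\subseteq A\times A$, not only the complete case), while Proposition~\ref{PEquiv2} establishes $(2)\Leftrightarrow(3)$ for arbitrary $Y\subseteq A^+$. Together these give a \emph{new} proof of $(1)\Leftrightarrow(3)$, which is the paper's declared aim (see the closing lines of Section~\ref{introduction} and the sentence just before the theorem). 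Your scheme, by contrast, reproves only $(2)\Leftrightarrow(3)$ (essentially Proposition~\ref{PEquiv2}) and imports $(1)\Leftrightarrow(3)$ from \cite{completi}; what your approach buys is brevity, what the paper's buys is independence from \cite{completi} and a more general circular-to-flat correspondence (Proposition~\ref{PEquiv1}).

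A few local issues. In your $(3)\Rightarrow(2)$ you misread the handles of $\fagnotrul{a}{1}{1}{b}$: here $\beta=1$, so $u$ is any word containing an $a$ (not $u=xa\cdot by$), and the production inserts $v=zb$ immediately after that $a$; conjugacy-closedness of $Y$ plays no role in this step (Proposition~\ref{PEquiv2} holds for all $Y\subseteq A^+$). The inclusion $Y^{\leftarrow_*}\setminus\{1\}\subseteq\fagnotlang{{\cal S}}$ needs a separate argument when the insertion point is at an end of the host word, since there is no boundary letter to serve as $\alpha$ when $z_1=1$; the paper handles this by exchanging the roles of the two operands. Finally, in your $(2)\Rightarrow(1)$ the claim ``using Lemma~\ref{insertion} \dots\ the iterated circular splicing closure of $\1 Y$ under $R$ has full linearization exactly $Y^{\leftarrow_*}\setminus\{1\}$'' is not justified by Lemma~\ref{insertion}, which concerns linear iterated insertion, not circular splicing. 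That claim is precisely the $(3)\Rightarrow(1)$ direction of \cite{completi} (or, in the paper's framework, Proposition~\ref{PEquiv1} composed with Proposition~\ref{PEquiv2}), so your cycle really closes via $(2)\Rightarrow(3)\Rightarrow(1)$ using both directions of the cited result, not via a direct $(2)\Rightarrow(1)$.
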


Theorem \ref{Equiv1} is a direct consequence
of the following two results.

Recall that in a circular splicing system
$S = (A, I, R)$, the set $R$ is supposed to be symmetric.

\begin{proposition} \label{PEquiv1}
Let $S = (A, I, R)$ be a $(1,3)$-CSSH system.
Then the flat splicing system
${\cal S}=(A, Y, R')$, where $Y = Lin(I)$
and $R' = \{ \fagnotrul{a}{1}{1}{b} ~|~ (a,b) \in R \}$, is such that
$\fagnotlang{{\cal S}} = Lin(L(S))$.
Conversely, let
${\cal S}=(A, Y, R')$ be a flat splicing system,
where $Y \subseteq A^+$ is a
finite language closed under the conjugacy
relation, $R' = \{\fagnotrul{a}{1}{1}{b} ~|~ (a,b) \in R \}$
and $R$ is a symmetric relation on $A$.
Let $I = \1 Y$ be the circularization of $Y$.
Then $\fagnotlang{{\cal S}} = Lin(L(S))$,
where $S = (A, I, R)$ is a $(1,3)$-CSSH system.
\end{proposition}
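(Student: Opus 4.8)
The plan is to reduce both halves of the statement to a single combinatorial lemma about flat splicing with rules of the restricted shape $\fagnotrul{a}{1}{1}{b}$, and then to match circular productions with flat productions step by step. First I would record how such a rule acts: applying $\fagnotrul{a}{1}{1}{b}$ to a host $u$ containing an occurrence of the letter $a$ and to a word $v$ ending in the letter $b$ simply inserts $v$ immediately to the right of that occurrence of $a$, so a single flat production has the shape $u=xay$, $v=zb$, $w=xazby$.

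The key lemma is: \emph{if $Y\subseteq A^+$ is closed under conjugacy, $R$ is a symmetric relation on $A$, and $R'=\{\fagnotrul{a}{1}{1}{b}\mid (a,b)\in R\}$, then $\fagnotlang{(A,Y,R')}$ is closed under conjugacy.} I would prove it by induction on the number of flat productions needed to derive $w$ from $Y$, the induction claim being that every conjugate $w'$ of $w$ lies in $\fagnotlang{(A,Y,R')}$. The base case is the hypothesis on $Y$. In the inductive step, write $w=xazby$ as above with $u=xay$, $v=zb$; a conjugate of $w$ is obtained by cutting $w$ at a point lying in $x$, in the block $az$, or in the block $by$, and in each case $w'$ is again produced by a single insertion, of the shape $(x_2a)(zb)(yx_1)$, or $(z_2b)(yxa)(z_1)$, or $(y_2xa)(zb)(y_1)$. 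In every case the host and the inserted word appearing there are conjugates of $u$ or of $v$, hence already in $\fagnotlang{(A,Y,R')}$ by the induction hypothesis; the rule used is $\fagnotrul{a}{1}{1}{b}$ in the first and third case, and $\fagnotrul{b}{1}{1}{a}$ — which belongs to $R'$ precisely because $R$ is symmetric — in the middle case. Closure of $\fagnotlang{(A,Y,R')}$ under its own rules then gives $w'\in\fagnotlang{(A,Y,R')}$.

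Granting the lemma, I would prove the first assertion (with $Y=Lin(I)$) by two straightforward inductions. For $\fagnotlang{{\cal S}}\subseteq Lin(L(S))$, induct on the number of flat productions: the axioms give $Lin(I)\subseteq Lin(L(S))$ since $I\subseteq L(S)$, and if $u,v\in Lin(L(S))$ and $w=xazby$ comes from $u=xay$, $v=zb$ by the rule $(a,b)\in R$, then $\1 u$ has the linearization $yxa$ ending in $a$ and $\1 v$ has the linearization $zb$ ending in $b$, so the circular rule $a\#1\$b\#1$ produces $\1(yxa\cdot zb)=\1(xazby)=\1 w$, whence $w\in Lin(L(S))$. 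For $Lin(L(S))\subseteq\fagnotlang{{\cal S}}$, induct on the number of circular productions: the axioms are clear, and if $(\1 w',\1 w'')\vdash_r\1 w$ with linearizations $w'=xa$, $w''=yb$ and $r=a\#1\$b\#1$, then by induction $xa,yb\in\fagnotlang{{\cal S}}$, so the flat rule $\fagnotrul{a}{1}{1}{b}$ applied to the host $xa$ (cut after its last letter) and to $yb$ yields $xayb\in\fagnotlang{{\cal S}}$; by the lemma every conjugate of $xayb$, that is every word in $Lin(\1 w)$, lies in $\fagnotlang{{\cal S}}$.

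The converse assertion is the same statement repackaged: given $\mathcal S=(A,Y,R')$ with $Y$ finite and conjugacy-closed and $R$ symmetric, put $I=\1 Y$; then $S=(A,I,R)$ is a $(1,3)$-CSSH system by inspection of the rule shape, and conjugacy-closedness of $Y$ gives $Lin(I)=Lin(\1 Y)=Y$, so $\mathcal S$ is exactly the flat system associated with $S$ in the first part, and the first assertion yields $\fagnotlang{{\cal S}}=Lin(L(S))$. I expect the genuinely delicate point to be the conjugacy-closure lemma, and inside it the middle case, where a conjugate of $w$ must be realized by inserting a conjugate of the former host into a conjugate of the former inserted word: this is exactly where the symmetry of $R$ is indispensable and where the single-letter outer handles and empty inner handles of the rules make the re-factorization of $w'$ legitimate. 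Checking the degenerate subcases (empty $x$, $y$, $z$, or $a=b$) is routine but has to be done.
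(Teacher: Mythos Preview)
Your proof is correct and, at the level of the actual computations, coincides with the paper's: the two easy inductions (flat-to-circular and circular-to-flat) are identical, and the case analysis in your conjugacy-closure lemma is exactly the case analysis the paper carries out. The organizational difference is that the paper folds conjugacy closure into the main induction on circular steps, taking as inductive hypothesis that \emph{all} linearizations of $\1 w$ lie in $\fagnotlang{{\cal S}}$ and then, in the inductive step, running the three-case argument directly to produce every conjugate of $xayb$ from conjugates of $xa$ and $yb$; you instead isolate the statement ``$\fagnotlang{{\cal S}}$ is closed under conjugacy'' as a standalone lemma proved by induction on flat derivation length, and then in the circular-to-flat direction you only need to land a single linearization. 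Your packaging is arguably cleaner --- the lemma is a reusable fact about this class of flat systems --- while the paper's version avoids a second induction; but the substance, including the crucial use of symmetry of $R$ to license the rule $\fagnotrul{b}{1}{1}{a}$ in the middle case, is the same.
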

\begin{proof}
Let $S = (A, I, R)$ be a $(1,3)$-CSSH system.
Consider the flat splicing system
${\cal S}=(A, Y, R')$, where $Y = Lin(I)$
and $R' = \{ \fagnotrul{a}{1}{1}{b} ~|~ (a,b) \in R \}$.
We prove that $\fagnotlang{{\cal S}} = Lin(L(S))$. Let $L = L(S)$.
We prove first the inclusion $Lin(L) \subseteq \fagnotlang{{\cal S}}$.
The proof is by induction on the minimal number of steps used for
generating $\1 w \in L$. If the number of steps is null, we have
$\1 w \in I$, thus $Lin(\1 w) \subseteq Lin(I) = Y \subseteq \fagnotlang{{\cal S}}$.

Suppose now that for any word $\1 w \in L$ generated in at most $k$ steps,
we have $Lin(\1 w) \subseteq \fagnotlang{{\cal S}}$. Let $\1 w$ be a word
generated in at least $k+1$ steps. By the definition of the circular splicing
operation, there are two circular words $\1 u$ and $\1 v$, generated in at most
$k$ steps, a rule $(a, b) \in R$ and words $x,y$ such that $\1 u = \1 xa$,
$\1 v = \1 yb$, $\1 w = \1 xayb$. By induction,
$Lin(\1 xa) \subseteq \fagnotlang{{\cal S}}$,
$Lin(\1 yb) \subseteq \fagnotlang{{\cal S}}$ and we have to show
that any word in $Lin(\1 xayb)$ is in $\fagnotlang{{\cal S}}$.
Since $xa, yb$ are in $\fagnotlang{{\cal S}}$ and
$\fagnotrul{a}{1}{1}{b}, \fagnotrul{b}{1}{1}{a}$ are in $R'$,
then clearly $xayb, ybxa$ are in $\fagnotlang{{\cal S}}$. Hence assume
$w = st$, $xayb = ts$ for some nonempty words $s,t$, $t \not = xa$.
Therefore, either $t$ is a proper prefix of $xa$ or $xa$ is a proper
prefix of $t$. In the first case, there is a word $x'$ such that
$xa = tx'a$, $s = x'ayb$, and thus $w = x'a yb t$. Since $x'at \sim xa$, we
have $x'at \in \fagnotlang{{\cal S}}$ and so also
$w = x'a yb t \in \fagnotlang{{\cal S}}$, by using the rule
$\fagnotrul{a}{1}{1}{b} \in R'$.
Otherwise, there are words $y_1, y_2$ such that $t = xa y_1,
y = y_1y_2, s = y_2b$. Clearly $y_2by_1$ is in $\fagnotlang{{\cal S}}$
which yields $w = y_2bxa y_1  \in \fagnotlang{{\cal S}}$, by using the rule
$\fagnotrul{b}{1}{1}{a} \in R'$.
In conclusion, $Lin(\1 xayb) \subseteq \fagnotlang{{\cal S}}$.

We now prove that $\fagnotlang{{\cal S}} \subseteq Lin(L(S))$.
Let $L = \fagnotlang{{\cal S}}$.
The proof is by induction on the minimal number of steps used for
generating $w \in L$. If the number of steps is null, we have
$w \in Y = Lin(I) \subseteq Lin(L(S))$.

Suppose now that for any word $w \in L$ generated in at most $k$ steps,
we have $w \in Lin(L(S))$. Let $w$ be a word
generated in at least $k+1$ steps.
By the definition of the flat splicing
operation, there are two words $u$ and $v$, generated in at most
$k$ steps, a rule $\fagnotrul{a}{1}{1}{b} \in R'$
and words $x,y,z$ such that $u = xaz$,
$v = yb$, $w = xaybz$. By induction,
$u = xaz$, and so also $zxa$, and $yb$ are in $Lin(L(S))$.
This implies $\1 zxa, \1 yb \in L(S)$.
Since $(a,b) \in R$, by the definition of the circular splicing
operation, we also have $\1 w = \1 zxayb \in L(S)$ and
$w \in Lin(L(S)$.

Conversely, let
${\cal S}=(A, Y, R')$ be a flat splicing system,
where $Y \subseteq A^+$ is a
finite language closed under the conjugacy
relation, $R' = \{\fagnotrul{a}{1}{1}{b} ~|~ (a,b) \in R \}$
and $R$ is a symmetric relation on $A$.
Consider the $(1,3)$-CSSH system
$S = (A, I, R)$, where
$I = \1 Y$ is the circularization of $Y$.
By the first part of the proof, there is
a flat splicing system ${\cal S'}$ such that
$Lin(L(S)) = \fagnotlang{{\cal S'}}$.
Clearly ${\cal S'} = {\cal S}$ and this equality ends
the proof.
\end{proof}

\begin{proposition} \label{PEquiv2}
Let $Y \subseteq A^+$ be a set of nonempty words.
Then
$Y^{\leftarrow_{*}} \setminus \{1 \} =
\fagnotlang{{\cal S}}$, where
${\cal S}=(A, Y, R')$ is a flat splicing system and
$R' = \{ \fagnotrul{a}{1}{1}{b} ~|~ a,b \in A \}$.
\end{proposition}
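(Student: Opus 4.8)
The plan is to prove the two inclusions of $Y^{\leftarrow_{*}} \setminus \{1\} = \fagnotlang{{\cal S}}$ separately: the inclusion $\fagnotlang{{\cal S}} \subseteq Y^{\leftarrow_{*}} \setminus \{1\}$ is essentially a restatement of Lemma~\ref{insertion} together with the minimality of $\fagnotlang{{\cal S}}$, while the reverse inclusion is obtained by an induction on the number of iterated insertions. Before either inclusion I would record two preliminary observations about ${\cal S}=(A,Y,R')$. First, since $Y \subseteq A^+$ and every rule of $R'$ has the shape $\fagnotrul{a}{1}{1}{b}$ with $a,b \in A$, applying such a rule to a pair $(u,v)$ forces the inserted word $v$ to end with the letter $b$, hence to be nonempty; therefore every word produced is nonempty, so $\fagnotlang{{\cal S}} \subseteq A^+$ and in particular $1 \notin \fagnotlang{{\cal S}}$. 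Second, unfolding the flat splicing operation for $\fagnotrul{a}{1}{1}{b}$: from a host $u = u_1u_2$ whose prefix $u_1$ ends with $a$ and a word $v$ ending with $b$, the production $\fagnotproduc{r}{u}{v}{w}$ yields $w = u_1 v u_2$. Since $R'$ contains $\fagnotrul{a}{1}{1}{b}$ for \emph{all} pairs $(a,b)$, the concrete effect of $R'$ is simply: for all $u,v \in \fagnotlang{{\cal S}}$ and every factorization $u = u_1u_2$ with $u_1 \neq 1$, the word $u_1 v u_2$ lies in $\fagnotlang{{\cal S}}$. Taking $u_2 = 1$ gives $uv \in \fagnotlang{{\cal S}}$, and symmetrically, using $v$ as the host with the factorization $v = v\cdot 1$, also $vu \in \fagnotlang{{\cal S}}$; hence $u_1 v u_2 \in \fagnotlang{{\cal S}}$ in fact holds for an \emph{arbitrary} factorization $u = u_1 u_2$, including $u_1 = 1$.

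For $\fagnotlang{{\cal S}} \subseteq Y^{\leftarrow_{*}} \setminus \{1\}$ I would show that $Y^{\leftarrow_{*}} \setminus \{1\}$ contains $Y$ and is closed under $R'$, and conclude by minimality of $\fagnotlang{{\cal S}}$. It contains $Y$ because $Y = Y^{\leftarrow_{1}} \subseteq Y^{\leftarrow_{*}}$ and $Y \subseteq A^+$. For closure under $R'$, given $u,v \in Y^{\leftarrow_{*}} \setminus \{1\}$ and a factorization $u = u_1u_2$ with $u_1 \neq 1$, Lemma~\ref{insertion} applied with $w := u$, $z := v$, $w_1 := u_1$, $w_2 := u_2$ yields $u_1 v u_2 \in Y^{\leftarrow_{*}}$, and this word is nonempty since $v \neq 1$; so it lies in $Y^{\leftarrow_{*}} \setminus \{1\}$.

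For the reverse inclusion $Y^{\leftarrow_{*}} \setminus \{1\} \subseteq \fagnotlang{{\cal S}}$ I would prove by induction on $i$ that $Y^{\leftarrow_{i}} \setminus \{1\} \subseteq \fagnotlang{{\cal S}}$. The cases $i=0$ and $i=1$ are immediate, since $Y^{\leftarrow_{0}} = \{1\}$ and $Y^{\leftarrow_{1}} = Y \subseteq \fagnotlang{{\cal S}}$. For the inductive step, let $w \in Y^{\leftarrow_{i+1}} \setminus \{1\}$, so $w = z_1 y z_2$ with $z_1 z_2 \in Y^{\leftarrow_{i}}$ and $y \in Y$. If $z_1 z_2 = 1$ then $w = y \in Y \subseteq \fagnotlang{{\cal S}}$. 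Otherwise $z_1 z_2 \in Y^{\leftarrow_{i}} \setminus \{1\} \subseteq \fagnotlang{{\cal S}}$ by the induction hypothesis and $y \in Y \subseteq \fagnotlang{{\cal S}}$; inserting $y$ into the host $z_1 z_2$ at the cut point between $z_1$ and $z_2$, which is legitimate by the last sentence of the first paragraph since arbitrary factorizations are allowed, produces $z_1 y z_2 = w \in \fagnotlang{{\cal S}}$. This closes the induction and the proof.

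The only delicate point — which I would flag as the main (minor) obstacle — is the mismatch between the flat splicing rule $\fagnotrul{a}{1}{1}{b}$, which cannot insert a word strictly before the first letter of the host, and the operation $\leftarrow$, which allows $z_1 = 1$. This is exactly repaired by the observation that $vu \in \fagnotlang{{\cal S}}$ whenever $u,v \in \fagnotlang{{\cal S}}$ (swap the roles of host and inserted word, using that both are nonempty by the first preliminary observation), which is what lets the inductive step above treat the case $z_1 = 1$ uniformly. Everything else is routine bookkeeping, the forward inclusion being a direct consequence of Lemma~\ref{insertion}.
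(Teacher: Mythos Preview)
Your proof is correct and follows essentially the same approach as the paper: both directions use the same inductions (on the number of flat splicing steps for $\fagnotlang{{\cal S}} \subseteq Y^{\leftarrow_{*}} \setminus \{1\}$, phrased by you equivalently via minimality and closure, and on $i$ for $Y^{\leftarrow_{i}} \setminus \{1\} \subseteq \fagnotlang{{\cal S}}$), both invoke Lemma~\ref{insertion} for the forward inclusion, and both handle the boundary case $z_1 = 1$ in the reverse inclusion by swapping the roles of host and inserted word (what the paper calls the ``symmetric argument'' and you package as a preliminary observation).
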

\begin{proof}
We prove that $\fagnotlang{{\cal S}} \subseteq Y^{\leftarrow_{*}} \setminus \{1 \}$.
Of course $\fagnotlang{{\cal S}} \subseteq A^+$.
Let $L = \fagnotlang{{\cal S}}$.
The proof is by induction on the minimal number of steps used for
generating $w \in L$. If the number of steps is null, we have
$w \in Y  \subseteq Y^{\leftarrow_{*}} \setminus \{1 \}$.

Suppose now that for any word $w \in L$ generated in at most $k$ steps,
we have $w \in Y^{\leftarrow_{*}} \setminus \{1 \}$. Let $w$ be a word
generated in at least $k+1$ steps.
By the definition of the flat splicing
operation, there are two words $u$ and $v$, generated in at most
$k$ steps, a rule $\fagnotrul{a}{1}{1}{b} \in R'$
and words $x,y,z$ such that $u = xaz$,
$v = yb$, $w = xaybz$. By induction,
$u$ and $v$ are in $Y^{\leftarrow_{*}} \setminus \{1 \}$,
hence $w$ is also in $Y^{\leftarrow_{*}} \setminus \{1 \}$,
by Lemma \ref{insertion}.

Conversely, by induction we prove that $Y^{\leftarrow_{i}}
\subseteq \fagnotlang{{\cal S}}$, $i \geq 1$.
Clearly $Y \subseteq \fagnotlang{{\cal S}}$.
Let $w$ be a word in
$Y^{\leftarrow_{i+1}}$, $i \geq 1$.
By definition there are $z_1z_2 \in Y^{\leftarrow_{i}}$ and $y \in Y$
such that $w = z_1yz_2$. By induction the nonempty
word $z_1z_2$ is in $\fagnotlang{{\cal S}}$.
If $z_1 \not = 1$, set $z_1 = z'_1 a$, $y = y'b$, with $a,b \in A$.
Thus the word
$w = z'_1 a y'b z_2$ is in $\fagnotlang{{\cal S}}$, by using the rule
$\fagnotrul{a}{1}{1}{b} \in R'$.
If $z_1 = 1$, then $z_2 \not = 1$ and by a symmetric argument
we prove that $w \in \fagnotlang{{\cal S}}$.
\end{proof}

Theorem \ref{Equiv1} yields
the following characterization of the $(1,3)$-complete systems
generating regular circular languages.

\begin{theorem}
A $(1,3)$-complete system $S = (A, I, R)$
generates a regular circular language if and only if $Lin(I)$ is
unavoidable in $A^*$.
\end{theorem}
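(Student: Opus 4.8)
The plan is to chain together the equivalences already proved in the paper so as to reduce the statement to the characterization of regular pure unitary languages (Theorem~\ref{RegCar}). The one genuinely new observation needed is that, for a complete system, the conjugacy-closed finite language produced by Theorem~\ref{Equiv1} may be taken to be $Lin(I)$ itself.

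First I would note that a $(1,3)$-complete system $S = (A, I, R)$ is in particular a $(1,3)$-CSSH system, and that, once $R$ is identified with a subset of $A \times A$ as in Section~\ref{Scomplete}, completeness means precisely $R = A \times A$. Therefore the flat splicing system ${\cal S} = (A, Lin(I), R')$ with $R' = \{ \fagnotrul{a}{1}{1}{b} ~|~ a,b \in A \}$ is exactly the one attached to $S$ by Proposition~\ref{PEquiv1} and also the one appearing in item~(2) of Theorem~\ref{Equiv1}. Hence Proposition~\ref{PEquiv1} gives $\fagnotlang{{\cal S}} = Lin(L(S))$ while Proposition~\ref{PEquiv2} gives $\fagnotlang{{\cal S}} = Lin(I)^{\leftarrow_{*}} \setminus \{1\}$, so that
\[
Lin(L(S)) = Lin(I)^{\leftarrow_{*}} \setminus \{1\}.
\]
Since $S$ is finite and $1 \notin I$, the set $Lin(I)$ is a finite subset of $A^+$, and since $alph(I) = A$ for complete systems we have $alph(Lin(I)) = A$; thus $G = (A, Lin(I))$ is a genuine pure unitary grammar and, by Proposition~\ref{insertion-unitary}, $L(G) = Lin(I)^{\leftarrow_{*}}$.

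Next I would invoke the classical fact, recalled in Section~\ref{cwdef}, that a circular language $C$ is regular if and only if $Lin(C)$ is regular. Applied to $C = L(S)$, this says that $L(S)$ is a regular circular language if and only if $Lin(L(S)) = L(G) \setminus \{1\}$ is a regular language; and since regularity is not affected by adjoining or deleting the single word $1$, this holds if and only if $L(G)$ is regular. Finally, Theorem~\ref{RegCar} asserts that $L(G)$ is regular if and only if $Lin(I)$ is unavoidable in $A^*$. Concatenating these equivalences gives exactly the claim.

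I do not expect a serious obstacle here: the substance is carried entirely by Theorems~\ref{Equiv1} and~\ref{RegCar}, and what remains is bookkeeping. The only points deserving a line of care are (a) checking that a complete system has rule set literally equal to $A \times A$, which is what allows the finite conjugacy-closed language of Theorem~\ref{Equiv1} to be taken as $Lin(I)$ rather than some other linearization of $I$; and (b) the harmless mismatch of the empty word between $Lin(L(S))$ and $Lin(I)^{\leftarrow_{*}}$ (equivalently $L(G)$), which is irrelevant for all the regularity assertions involved.
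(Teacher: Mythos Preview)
Your proposal is correct and is exactly the argument the paper intends: the paper merely asserts that the theorem follows from Theorem~\ref{Equiv1} (together with Theorem~\ref{RegCar}), and you have spelled out precisely that deduction, including the key observation that for a complete system one may take $Y = Lin(I)$ in Propositions~\ref{PEquiv1} and~\ref{PEquiv2}. The only superfluous step is the remark that $alph(Lin(I)) = A$ is needed for $G = (A, Lin(I))$ to be a pure unitary grammar --- the definition does not require this --- but it is harmless and in any case $alph(I) = A$ is a standing assumption in Section~\ref{Scomplete}.
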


\begin{example}
Let $S = (\{a,b\}, \1 \{aa, b \}, \{(a,a), (b, b), (a,b) \})$.
Then $Lin(I) = \{aa, b \}$ is
unavoidable in $A^*$ \cite{lotN}. It is easy
to see that
$Lin(L(S)) = \{ w \in \{a,b\}^+ ~|~ |w|_a = 2k, k \geq 0 \}$.
\end{example}

\end{document}